\newtheorem{lemma}{Lemma}[section]
\newtheorem{theorem}{Theorem}[section]
\begin{document}
	
	\title{Towards Optimal Control of Amyloid Fibrillation}

	\author{Mengshou Wang$^{1}$, Gao Li$^{2,3}$, Liangrong Peng$^{4}$, Liu Hong$^{1}$\footnote{Author to whom correspondence should be addressed. Electronic mail: hongliu@sysu.edu.cn}\\
		$^1$School of Mathematics, Sun Yat-sen University, Guangzhou, 510275, P.R.C.\\
		$^2$Fujian Engineering Research Center of New Chinese Lacquer Material, College of Material and Chemical Engineering, Minjiang University, Fuzhou, 350108, P.R.C\\
		$^3$Institute of Oceanography, Minjiang University, Fuzhou, 350108, P.R.C\\
		$^4$College of Mathematics and Data Science, Minjiang University, Fuzhou, 350108, P.R.C.}
	
	\begin{abstract}
		EGCG, as a representative of amyloid inhibitors, has shown a promising ability against A$\beta$ fibrillation by directly degradating the mature fibrils. Most previous studies have been focusing on its functional mechanisms, meanwhile its optimal dosage has been seldom considered. To solve this critical issue, we refer to the generalized Logistics model for amyloid fibrillation and inhibition and adopt the optimal control theory to balance the effectiveness and cost (or toxicity) of inhibitors. The optimal control trajectory of inhibitors is analytically solved, based on which the influence of model parameters, the difference between the optimal control strategy and several traditional drug dosing strategies are systematically compared and validated through experiments. Our results will shed light on the rational usage of amyloid inhibitors in clinic.
	\end{abstract}
	
	\keywords{Optimal Control, Amyloid Fibrillation, Inhibition, Pontryagin's Maximum Principle}
	
	\maketitle
	
	\section{Introduction}
	Many human neurodegenerative diseases, including Alzheimer' s disease and Parkinson' s disease, are closely related with amyloid fibrillation \cite{goedert2006century, geschwind2003tau, ueda1993molecular}. For example, the amyloid cascade hypothesis believes that the aggregation of amyloid-$\beta$ (A$\beta$) is a major causative reagent for Alzheimer' s disease. And any effective reduction on the amount of amyloid aggregates will be beneficial to human bodies. However, due to the complexity of amyloid fibrillation processes, especially when considering the variety of amyloid inhibitors, the development of efficient therapeutics for amyloid related diseases is still a difficult task. Till now, all efforts along this direction failed in clinic, except for Aducanumab, a first drug approved by FDA for Alzheimer' s disease in twenty years though full of controversy \cite{alexander2021NEJM}.
	
	To solve this critical issue, in recent years more and more studies have been designed to elucidate the inhibitory mechanisms of different inhibitors. Meanwhile, the potential toxicity or other negative effects of amyloid inhibitors have not received enough attention. Furthermore, some amyloid inhibitors are difficult to synthesize or extract, which makes them quite expensive. Therefore, the exploration on the optimal strategies for the usage of amyloid inhibitors is of great importance in clinic.
	
	In contrast to tremendous applications of optimal control in economics\cite{Kamien2012,Weber2011}, epidemics\cite{Agusto2012,Sharomi2017,shen2021mathematical,lemecha2020optimal}, and cancer\cite{Murray1990,Schaettler2015,Jarrett2020,Cunningham2018,cunningham2020optimal}, \textit{etc.}, there are quite few related studies in the current field. Thomas Michaels \textit{et al.} applied the optimal control theory to analyze the relation between the molecular basis of amyloid fibrillation and optimal regions for inhibition\cite{Michaels2019}. Taking the variability of reaction kinetics into account, Alexander Dear \textit{et al.} used stochastic optimal control theory to determine the optimal dosage of inhibitors that act on several key steps of  amyloid fibrillation\cite{Dear2021}.

	In the current study, we refer to the optimal control theory to design the optimal strategy for the usage of amyloid inhibitors, in particular EGCG -- a small molecule extracted from green tea which shows a strong ability to eliminate mature fibrils \cite{Ehrnhoefer2008EGCG}. Based on the Pontryagin's Maximum Principle (PMP), the optimal strategy governed by a group of ordinary differential equations is derived, which can be explicitly solved when few fibrils are presented in the system. For general situations, phase diagrams revealing the dependence of optimal control strategy on typical dimensionless parameters are numerically explored. The optimal control strategy is further compared with several other traditional strategies, like lump-sum adding, multiple-times adding, \textit{etc.} Their difference is revealed and verified through carefully designed in-vitro experiments.
	
	The whole paper is organized as follows. Sec. I contains the introduction.  The basic kinetic model for amyloid fibrillation and inhibition, the optimal control strategy derived from the PMP, as well as its analytical solution are presented in Sec. II. The parameter dependence of the optimal control strategy and its comparison with other traditional dosing strategies are numerically explored and summarized in Sec. III. The last section is a conclusion. The details of mathematical derivation and experimental setup are left in the appendix.

\section{The Optimal Control of Amyloid Fibrillation}
\subsection{Gerneral Formulation}
	To describe the time-dependent processes of amyloid fibrillation, we denote the interested macroscopic quantities for characterizing the amount of amyloid aggregates by a vector $X(t)\in \mathbb{R}^d$. Among extensive candidates of $X(t)$, the number concentration and mass concentration of total aggregates are two most popular ones. Without loss of generality, the procedure of amyloid fibrillation in the presence of inhibitors is characterized by the following ordinary differential equations (ODEs),
	\begin{equation}\label{dxdt}
		\frac{d X}{d t}=f(X(t),u(t),t),\quad t \in {[0,T]},\quad X(t=0)=x_0,
	\end{equation}
	where the initial time is set to be $0$, $T$ is the terminal time, and \(x_0\) stands for the initial state. Here the concentration of inhibitors $u(t)$ is adopted to describe the regulatory effect of inhibitors. A concrete form of $f$ specifies the detailed kinetic model for amyloid fibrillation and inhibition, \textit{e.g.} the Logistics model, the NES (short for  primary nucleation, elongation, and secondary nucleation) model, \textit{etc.} Interested readers are referred to Refs. \cite{lim2019black,yuan2017biophysics} for further details.
	
	In what follows, we will first consider situations when the terminal time $T$ is fixed while the terminal state \(X(T)\) is free from restrictions. We will then consider the fixed terminal state \(X(T)=0\). By taking the amount of both fibrils and inhibitors into account, our central aim is to minimize the following objective functional $J$ by controlling the time profile of $u(t)$, \textit{i.e.}
	\begin{equation}\label{obj}
		\inf_{u(t)} J[u(t)]=\Phi[X(T)]+\int_{ 0}^{T} L[X(t), u(t), t] d t,
	\end{equation}	
	where $\Phi$ is the terminal cost, $L$ is the running cost. In the filed of optimal control, a model incorporating Eqs.\eqref{dxdt} and \eqref{obj} is known as the Bolza problem, whose solution gives the optimal control strategy.
	
	As an illustration, here we give two particular examples of the above objective functional. When neglecting the running cost and solely trying to minimizing the amount of amyloid aggregates at time $T$, we choose $ \Phi = | | X (T) | |^2 $, $L = 0 $. Here and in what follows we denote the $L_2$-norm by $||\cdot||$.  Contrarily, if neglecting the terminal cost and considering the accumulative effects of amyloid aggregates and inhibitors during the whole process instead, we can set $\Phi = 0 $, $L = | | X (t) | |^2 + | | u (t) | |^2 $.
	
	Besides the objective functional, there are also constraints needed to be considered. For example, the concentrations of fibrils and inhibitors can not be negative ($X(t)\geq0, u(t)\geq0$). If we put further restrictions on the total amount of inhibitors added during the whole procedure, an integral inequality emerges, \textit{i.e.} $\int_0^T u(t)dt\leq u_{tot}$. Therefore, in general we will have two kinds of constraints,
	\[
	C(X(t),u(t),t)=a, \qquad
	S(X(t),u(t),t)\le b,
	\]
	where both functionals $C$ and $S$ are assumed to be continuous with respect to $X(t), u(t)$ and $t$.
	
	As a summary, a general optimal control problem of amyloid fibrillation includes three major parts: the kinetic model, the objective functional and possible constraints. Once they are specified, the optimal control strategy can be derived by referring to the classical optimal control theory.

	\subsection{Kinetics of Amyloid Fibrillation and Inhibition}
	In the above section, we illustrate a general picture for the optimal control problem. Now we proceed to more concrete applications with explicit models for amyloid  fibrillation and inhibition.
	
	Let us denote $M(t)$ and $u(t)$ as the respective mass concentrations of amyloid aggregates and inhibitors at time $t$.  The typical growth of amyloid fibrils follows a sigmodal curve, which means the fibrils grow exponentially at the early stage, then get saturated due to the consumption of free monomers, and finally stop growing. In analogy to population dynamics, this process can be depicted by the Logistics model, \textit{i.e.}
	\begin{equation}\label{equ:log}
		\frac{d}{d t}M(t)=k_{a}M(t)\left(1-\frac{M(t)}{M_{max}}\right),
	\end{equation}
	where $k_{a}$ is the apparent fibril growth rate, $M_{max}$ is the maximal mass concentration of aggregates.
	
	\begin{figure}
		\centering
		\includegraphics[width=6.5in]{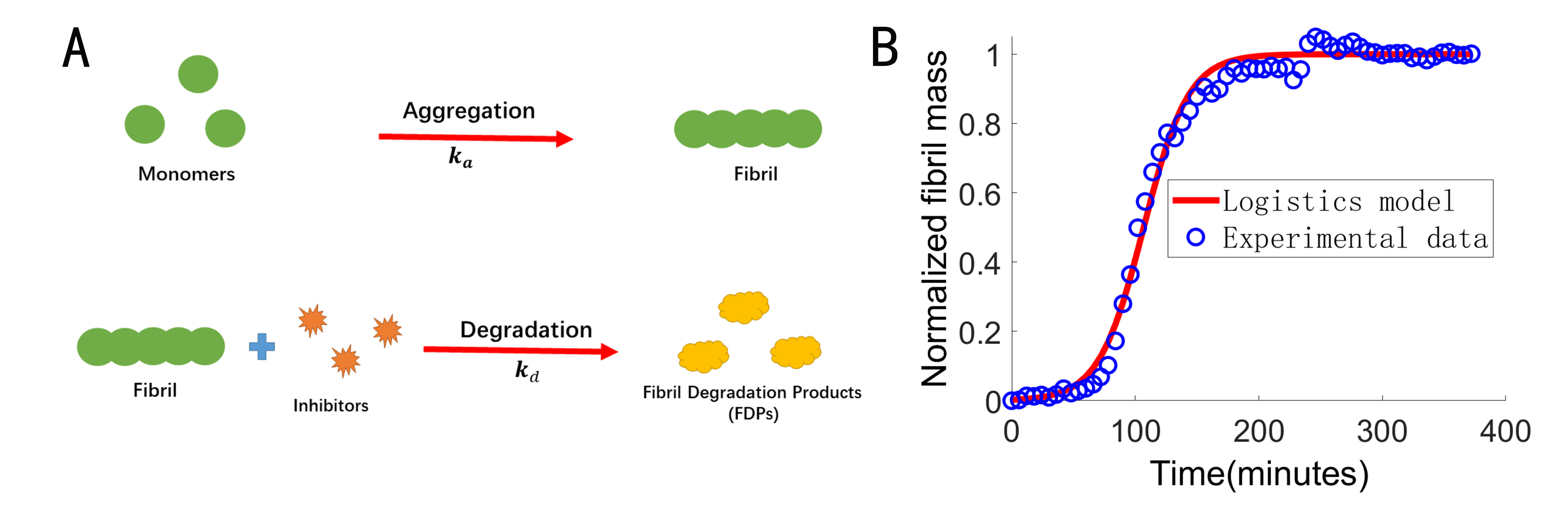}
		\centering
		\caption{(A) Sketch map of A$\beta$ fibrillation and inhibition by EGCG, and (B) kinetics of fibril growth.}
		\label{figure_1}
	\end{figure}
	
	In the presence of inhibitors, the conventional Logistics model requires further modification. For many natural compounds against amyloids, especially those against A$\beta$ such as sulfated polysaccharides from the sea cucumber \cite{li2021Inhibitory} and EGCG from green tea \cite{Ehrnhoefer2008EGCG}, they show strong abilities in degrading  amyloid aggregates directly, which means the rate of inhibition is proportional to the product of concentrations of inhibitors and fibrils as shown in Fig. 1(A). Therefore, we have a generalized Logistics model as
	\begin{equation}
		\label{equ:state}
		\frac{d}{d t}M(t)=k_{a}M(t)\left(1-\frac{M(t)}{M_{max}}\right)-k_{d} u(t) M(t),
	\end{equation}
	where \(k_{d}\) is the rate constant of fibril inhibition.
	
	It is straightforward to see that the mass concentrations of fibrils and inhibitors cannot be minimized simultaneously.
	In order to decrease these two concentrations as much as possible during the whole process (for the purpose of reducing the toxicity or cost), we need to make a trade-off.
	For simplicity, here we choose a simple quadratic objective functional,
	\begin{equation}\label{equ:obj}
		J[u(\cdot)]=\int_{0}^{T}[M(t)^2+w^2u(t)^2]d t,
	\end{equation}
	with $w>0$ denoting the relative weight. The objective functional with a terminal cost,
	\begin{equation}\label{equ:obj2}
		J[u(\cdot)]=vM(T)+\int_{0}^{T}[M(t)^2+w^2u(t)^2]d t,
	\end{equation}
	with $v>0$ denoting the relative weight,
	can be studied analogously and will not be addressed here.
	Here the $L^2$ norm is chosen only for simplicity. Alternative choice, like the $L^1$ norm, is given in Appendix \ref{L1}. The connection and difference between $L^{1}$ and $L^2$ norms have been discussed by Schattler \textit{et al.} during the application of optimal control in oncology \cite{schattler2015optimal}.

The mass concentrations of amyloid aggregates and inhibitors need to be non-negative at any time point, that is to say,
		\begin{equation}\label{constraints}
		M_{max} \geq M(t) \geq 0 \quad \text{ and} \quad u_{max}\geq u(t) \geq 0,
        \qquad \forall t \geq 0.
		\end{equation}
In fact, the existence of optimal control subjected to these inequalities can be given.
For $u_{max}>k_{d}M_{max}^2T/w^2$ (which will be clear),
the existence and uniqueness of the optimal control problem is proved, whose solution is  equivalent to solving a boundary value differential equation.
	
	\subsection{Optimal Control Strategy}
	Towards our problem, we can obtain the following important theorem by conducting  strict mathematical derivations.
	\begin{theorem}\label{main theorem1}
		Consider the optimal control of amyloid fibrillation given by the generalized Logistics model in Eq.\eqref{equ:state}, the quadratic cost in Eq.\eqref{equ:obj} and the constraints in Eq.\eqref{constraints}. Further suppose the terminal state \(X(T)\) to be free. When $u_{max} > k_{d}M_{max}^2T/w^2$, there exists a unique optimal control strategy, which is given by
		\begin{equation}\label{PMP_optimal solution}
			\left\{\begin{array}{cl}
				\dot{M^*}(t)=&k_{a}M^*(t)(1-\frac{M^*(t)}{M_{max}})-k_{d} u^*(t) M^*(t),\\
				\dot{u^*}(t)=&\frac{k_{a}}{M_{max}}u^*(t)M^*(t) - \frac{k_{d}}{w^2}{M^*(t)}^{2},~t< T,
				\\M^*(0)=&{m_0}, \quad u^* (T)=0.
			\end{array}\right.
		\end{equation}
	where $u^{*}(t)$ denotes the optimal control strategy, and $M^{*}(t)$ is the corresponding optimal state.
	\end{theorem}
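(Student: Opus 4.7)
The plan is to apply Pontryagin's Maximum Principle (PMP) in its standard form for the Bolza problem with free terminal state, and then to eliminate the costate so as to obtain the autonomous ODE system stated in the theorem. First I would introduce the Hamiltonian
\begin{equation*}
H(M,u,\lambda) = M^2 + w^2 u^2 + \lambda\bigl[k_a M(1 - M/M_{max}) - k_d u M\bigr],
\end{equation*}
and write down the state equation, the adjoint equation
\begin{equation*}
\dot\lambda = -\partial_M H = -2M - \lambda\bigl(k_a - 2k_a M/M_{max} - k_d u\bigr),
\end{equation*}
and the transversality condition $\lambda(T)=0$ (justified by the terminal state being free).

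Next, since $H$ is strictly convex in $u$ with coefficient $w^2>0$, the pointwise minimizer of $H$ over $[0,u_{max}]$ is unique. The unconstrained first-order condition gives the candidate $u^* = k_d \lambda M/(2w^2)$; I then need to verify that this candidate automatically satisfies the box constraint. Non-negativity follows from $\lambda(t)\ge 0$ and $M(t)\ge 0$, which one obtains by integrating the adjoint equation backwards from $\lambda(T)=0$ using a Gronwall-type argument (the source term $-2M\le 0$ forces $\lambda\ge 0$). For the upper bound, a crude estimate using $M\le M_{max}$ and dropping the non-positive terms in the adjoint yields $\lambda(t)\le 2M_{max}(T-t)$, hence $u^*\le k_d M_{max}^2 T/w^2<u_{max}$, so the constraint $u\le u_{max}$ is inactive throughout $[0,T]$. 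This is precisely where the hypothesis on $u_{max}$ enters.

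Having established that the interior formula $u^* = k_d\lambda M/(2w^2)$ holds on the whole interval, I would eliminate $\lambda$ to produce the stated ODE for $u^*$. Differentiating $u^* = k_d\lambda M/(2w^2)$ in time and substituting $\dot\lambda$ and $\dot M$, most of the terms cancel and one is left with
\begin{equation*}
\dot u^* = \frac{k_a}{M_{max}}u^* M^* - \frac{k_d}{w^2}(M^*)^2,
\end{equation*}
together with the initial datum $M^*(0)=m_0$ and the boundary datum $u^*(T) = k_d \lambda(T) M^*(T)/(2w^2) = 0$ coming from transversality. This gives the boundary-value problem in the statement.

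Finally, I would address existence and uniqueness. Uniqueness of the optimal control follows from the strict convexity of the running cost in $u$ together with the unique solvability of the costate-state system for each admissible $u$; equivalently, any two optimal pairs would lead to the same $\lambda$ and hence the same $u^* = k_d\lambda M/(2w^2)$. Existence can be obtained either by Filippov's theorem (the admissible set $[0,u_{max}]$ is compact, the dynamics are affine in $u$, and the cost integrand is convex in $u$ with quadratic growth), or directly by showing that the two-point boundary value problem in \eqref{PMP_optimal solution} admits a solution via a shooting/contraction argument on $[0,T]$. The main technical obstacle I anticipate is the global bound on $\lambda$ (and the resulting verification that $u^*\le u_{max}$); getting a sharp constant here is what produces the threshold $k_d M_{max}^2 T/w^2$, and sloppy estimates would weaken the hypothesis. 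Everything else is a fairly routine application of PMP once that bound is in hand.
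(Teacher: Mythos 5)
Your overall route (PMP with transversality $\lambda(T)=0$, interior minimization in $u$, showing the box constraint is inactive under the hypothesis on $u_{max}$, then eliminating $\lambda$ to get the two-point boundary value problem) is the same as the paper's. However, there are two genuine gaps. The first is minor but real: your claimed estimate $\lambda(t)\le 2M_{max}(T-t)$ does not follow by ``dropping the non-positive terms,'' because the backward adjoint equation contains the term $+k_a\lambda$ with $\lambda\ge 0$, which is positive and cannot be dropped; a naive Gronwall bound would give an exponential factor and hence a weaker threshold than $k_dM_{max}^2T/w^2$. The clean way, which is what the paper does, is to work directly with the eliminated equation $\dot u^*=\tfrac{k_a}{M_{max}}u^*M^*-\tfrac{k_d}{w^2}(M^*)^2$: once $u^*\ge 0$ and $0\le M^*\le M_{max}$ are known, integrating backward from $u^*(T)=0$ gives $u^*(t)\le \tfrac{k_d}{w^2}M_{max}^2(T-t)$ because the first term has the favorable sign there. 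Your own caveat about ``sloppy estimates weakening the hypothesis'' is exactly the issue.

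The second gap is more serious: your uniqueness argument does not work. The functional $J[u]$ is not convex in $u$ because the dynamics are bilinear in $(M,u)$, so strict convexity of the running cost in $u$ alone does not give uniqueness of the minimizer; and the assertion that ``any two optimal pairs would lead to the same $\lambda$'' is unsupported, since distinct optimal controls would generate distinct states and hence distinct costates. What is actually needed (and what the paper supplies) is a uniqueness theorem for the nonlinear two-point boundary value problem \eqref{PMP_optimal solution} itself: the paper first derives the a priori bounds $0\le M<M_{max}$ and $0\le u<k_dM_{max}^2T/w^2$ via a comparison theorem, then invokes a Waltman-type existence--uniqueness criterion for two-point BVPs, checking that the associated comparison systems never connect the relevant boundary rays in finite time (so the admissible interval length is unrestricted). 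Your ``shooting/contraction argument'' is mentioned only for existence and, as a contraction, would typically only work for small $T$. Without a genuine BVP uniqueness step, the claim that the optimal control is \emph{unique} and characterized by \eqref{PMP_optimal solution} is not established.
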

	\begin{proof}
	We put the proof in Appendix \ref{Ap_U-C}.
	\end{proof}


	\subsection{Analytical Solutions When Fibrils are Few}
	\label{approximate analytic solution}
	
	Generally speaking, the optimal control strategy given in Eq. \eqref{PMP_optimal solution} has to be solved numerically, \textit{e.g.} by using the shooting method. However, when the mass concentration of fibrils is relatively low, it can be solved analytically under proper approximations.
	
	Notice that Eq. (\ref{equ:state}) can be linearized around its unstable fixed point \(M=0\), which corresponds to a low concentration of fibrils.
	This situation is practically significant when considering the suppressing effect of inhibitors.
	Given $M(t)/M_{max}\ll1$, the right-hand side of Eq.\eqref{equ:state} becomes
	\begin{equation}\label{equ:stateapp}
		\dot{M}(t)=k_{a}M(t)-k_{d} u(t) M(t),\quad M(0)={m_0},\quad t \in (0,T],
	\end{equation}
	where \({m_0}\) is a small initial value. Meanwhile, $u^*(t)$ satisfies a second-order ODE (see Appendix \ref{Ap_A} ), which reads
	\begin{equation}\label{u^*2}
		\left\{\begin{array}{cl}
			\ddot{u}^*(t)=2k_{a}\dot{u}^*(t)-2k_{d}u^*\dot{u}^*(t),\\
			u^*(T)=0,\quad \dot{u}^{*}(0)=-k_{d}{m_{0}}^{2}/w^2.
		\end{array}\right.
	\end{equation}
		In this case, the optimal trajectory \(M^*(t)\) and the optimal control strategy \(u^*(t)\) can be analytically solved,
	\begin{equation}\label{an_solu}
		u^*(t)=C_1\tan(-k_{d}C_1t+C_2)+\frac{k_{a}}{k_{d}},\quad
		M^*(t)=wC_1\sec(-k_{d}C_1t+C_2),
	\end{equation}
	where the constants $C_1$ and $C_2$ are given through the equalities $ {m_0}^2=w^2C_1^2\sec^2(C_2)$ and $ {k_{a}^2}/{k_{d}^2}=C_1^2\tan^2(-k_{d}C_1T+C_2)$. 	Details of calculation are given in Appendix \ref{Ap_A}.

	With respect to the above analytical solutions, the following conclusions can be reached.
	
	(i) If both $w$ and $m_{0}$ are proportionally enlarged by $k$ times, then $u^*(t)$ remains invariant, while $M^*(t)$ is also proportionally enlarged by $k$ times.
	
	(ii) If both $w$ and $k_{d}$ are proportionally enlarged by $k$ times, then $M^*(t)$ remains unchanged, while $u^*(t)-k_{a}/k_{d}$ is reduced by $k$ times.
	
	(iii) If both $k_{a}$ and $k_{d}$ are proportionally enlarged by $k$ times and $T$ is reduced by $k$ times, then $M^*(T)$ remains invariant.
%
%
	
	Intuitively, above linear approximation holds as long as the mass concentration of fibril remains low. To further test the validity of this hypothesis, we compare the approximate optimal trajectory of state $M^*(t)$ and the optimal strategy of inhibitors $u^*(t)$ in Eq.\eqref{an_solu} with the corresponding exact solutions in Eq.\eqref{PMP_optimal solution}. In Fig. \ref{figure_2}, it is observed that with the increment of \( m_{0}{k_{d}}/{k_{a}}\), a dimensionless parameter representing the degradation effect of inhibitors against amyloid fibrillation, the approximate solutions converge to the exact values rapidly. And its validity holds in a region even far beyond the original hypothesis that the fibrils should be kept relatively few.
	
	\begin{figure}[htbp]
		\centering
		\includegraphics[width=6in]{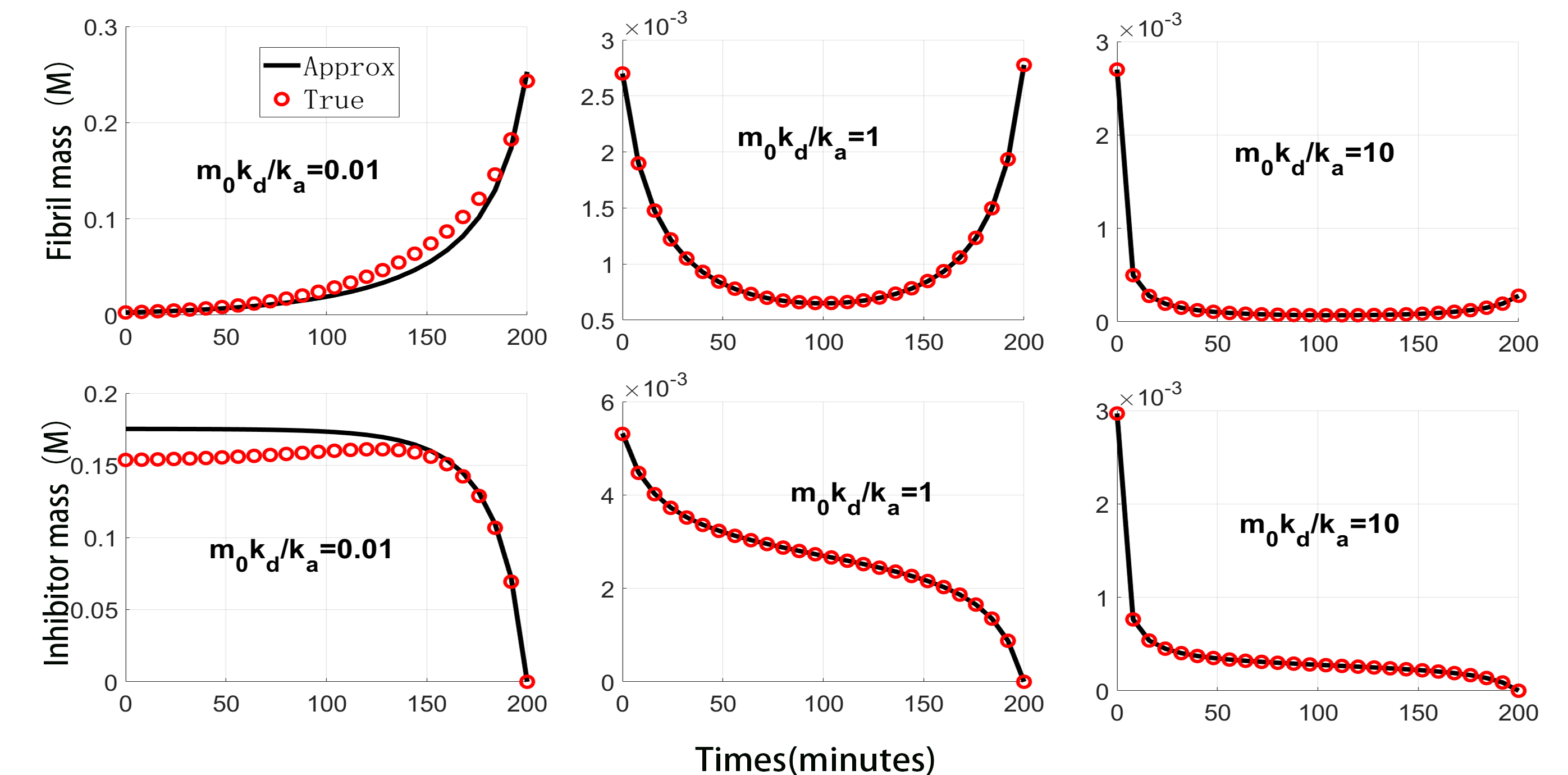}
		\centering
		\caption{Comparison of the approximate solutions (black lines) with numerical solutions (red dots) on both the optimal fibril concentration $M^*(t)$ (upper  panels) and inhibitor concentration $u^*(t)$ (lower panels). From left to right, the dimensionless parameter \(m_0{k_{d}}/{k_{a}}\)  varies as \(0.01,1,10\), while the rest parameters are kept as \(m_0=2.7 \times 10^{-3}\)\(\mathbf{\mu M}\), \(M_{max}=1\)\(\mathbf{\mu M}\), \(k_{a}=5.6 \times 10^{-2}\)\(\min^{-1}\), \(T=200 \min \) and \(w^2=1\).}
		\label{figure_2}
	\end{figure}
	
	\section{Phase Diagram and Strategy Comparison}
	
	In this section, we proceed to make a more thorough exploration on the optimal control strategy based on numerical solutions. A major difficulty is the optimal control strategy $u^*(t)$ follows a backward equation, which has to be solved from time $t=T$ to $t=0$. Here we adopt the shooting method, which is a standard approach to find the correct starting point of $u^*(t)$ at time $t=0$. Thanks to the fact that the terminal state \(u^*(T)\) is monotonous with respect to the initial state \(u(0)\), so that we can use the method of dichotomy to make an efficient guess.

	\subsection{Phase Diagram for the Optimal Control Strategy} 
	
	To be specific, we focus on the optimal control problem of using EGCG against A\(\beta\) aggregation. The apparent fibril growth rate \(k_{a}\) is determined through the fibrillation data of A\(\beta\) by performing nonlinear fitting with respect to the Logistic model (see Fig. \ref{figure_1}(B)), which gives \(k_{a}=5.6 \times 10^{-2} \min^{-1}\) under the condition \(M_{max}=1\)\(\mathbf{\mu M}\), \(m_0=2.7 \times 10^{-3}\)\(\mathbf{\mu M}\).
	
	To elucidate the inhibitory strength of inhibitors on fibril growth and the tolerance of inhibitors during the medical treatment, two dimensionless parameters, \( M_{max}{k_{d}}/{k_{a}}\) and \(w^2\), are found to play a key role in the optimal control problem. As shown in Figs. \ref{figure_3}(A)-(B), by fixing the inhibitor strength $M_{max}{k_{d}}/{k_{a}}$, the lower the tolerance of inhibitors $w^2$ is, the less apparent the inhibitory effect will be. In contrast, the amount of totally added inhibitors shows a non-monotonic dependence on the inhibitor's strength when $w^2$ is fixed. Therefore, a trade-off between the inhibitory strength and the amount of added inhibitors is presented in the optimal control.
	
	\begin{figure}[htbp]
		\centering
		\includegraphics[width=5in]{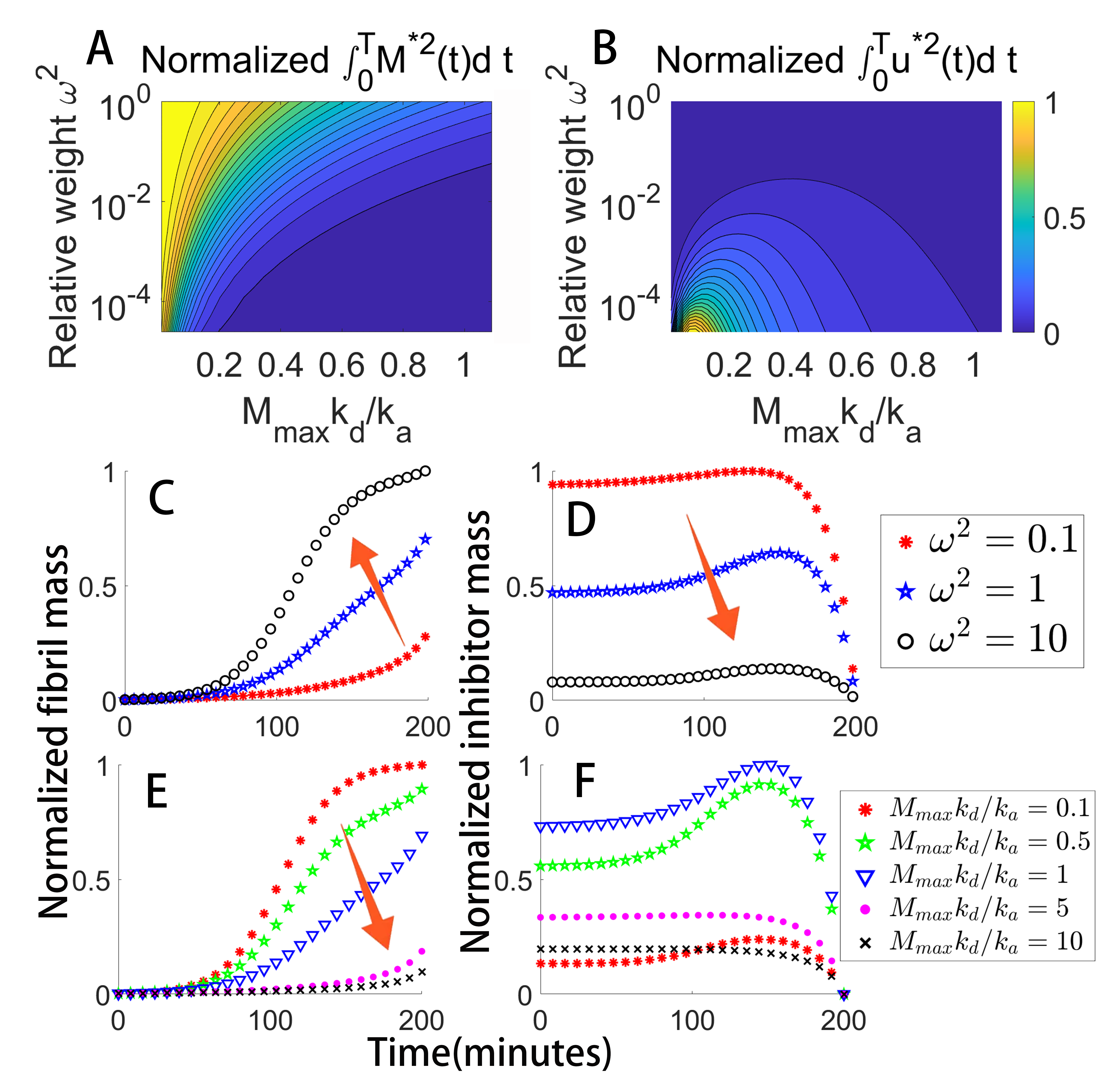}
		\centering
		\caption{Phase diagrams for the accumulative mass concentrations of (A) fibrils \(\int^{T}_{0}{M^*}^{2}(t)dt\) and (B) inhibitors \(\int^{T}_{0}{u^*}^{2}(t)dt\), as a function of the relative weight $w^2$ and $M_{max}k_d/k_a$. For clearance, here \(\int^{T}_{0}{M^*}^{2}(t)dt\) and \(\int^{T}_{0}{u^*}^{2}(t)dt\)  are normalized by their respective maximal values in the simulated region. The optimal trajectories of fibrils and inhibitors are shown respectively, under the conditions of (C, D) \(M_{max}k_d/k_a =1, w^2=0.1,1,10\) and (E, F) \(M_{max}k_d/k_a =0.1,0.5,1,5,10, w^2=1\).}
		\label{figure_3}
	\end{figure}
	
	Next, we look into the kinetic trajectories of $M^*(t)$ and $u^*(t)$ to explore the influence of parameters on the details of optimal control. As shown in Figs. \ref{figure_3}(C)-(D), as the contribution of inhibitor cost becomes more and more significant to the objective functional (meaning larger $w^2$), the fibril mass concentration gets higher and higher in the optimal state, meanwhile the inhibitor concentration shows an opposite tendency as expected. The influence of \( M_{max}{k_{d}}/{k_{a}}\) on the optimal control is a bit complicated. Overall, by increasing \( M_{max}{k_{d}}/{k_{a}}\), the fibril mass concentration drops quickly, demonstrating the improvement of inhibitory effect against the aggregation processes. Similar to the phase diagram in Fig. \ref{figure_3}(B), the inhibitor concentration shows a non-monotonic dependence on \( M_{max}{k_{d}}/{k_{a}}\). And the highest inhibitor concentration is achieved when the fibrillation and inhibition processes are of the same amplitude \( M_{max}{k_{d}}/{k_{a}}=1\). These typical behaviors of the optimal control trajectories are found in Figs. \ref{figure_3}(E)-(F).
	
In most cases, we are interested in the efficient inhibitors, which means $M_{max}{k_{d}}/{k_{a}}\gg1$. Under this condition, it is found that with the increase of the apparent fibril growth rate \(k_{a}\) or the tolerance of inhibitors $w^2$, more inhibitors are required from the early time to reach an optimal control. Meanwhile, opposite tendency is observed for the inhibition rate \(k_{d}\) and the initial monomer concentration \(m_{0}\) as shown in Figs. \ref{figure_6}(A-D). The corresponding optimal trajectories for fibrils and their dependence on model parameters are presented in Figs. \ref{figure_6}(E-H).
	\begin{figure}[htbp]
		\centering
		\includegraphics[width=5in]{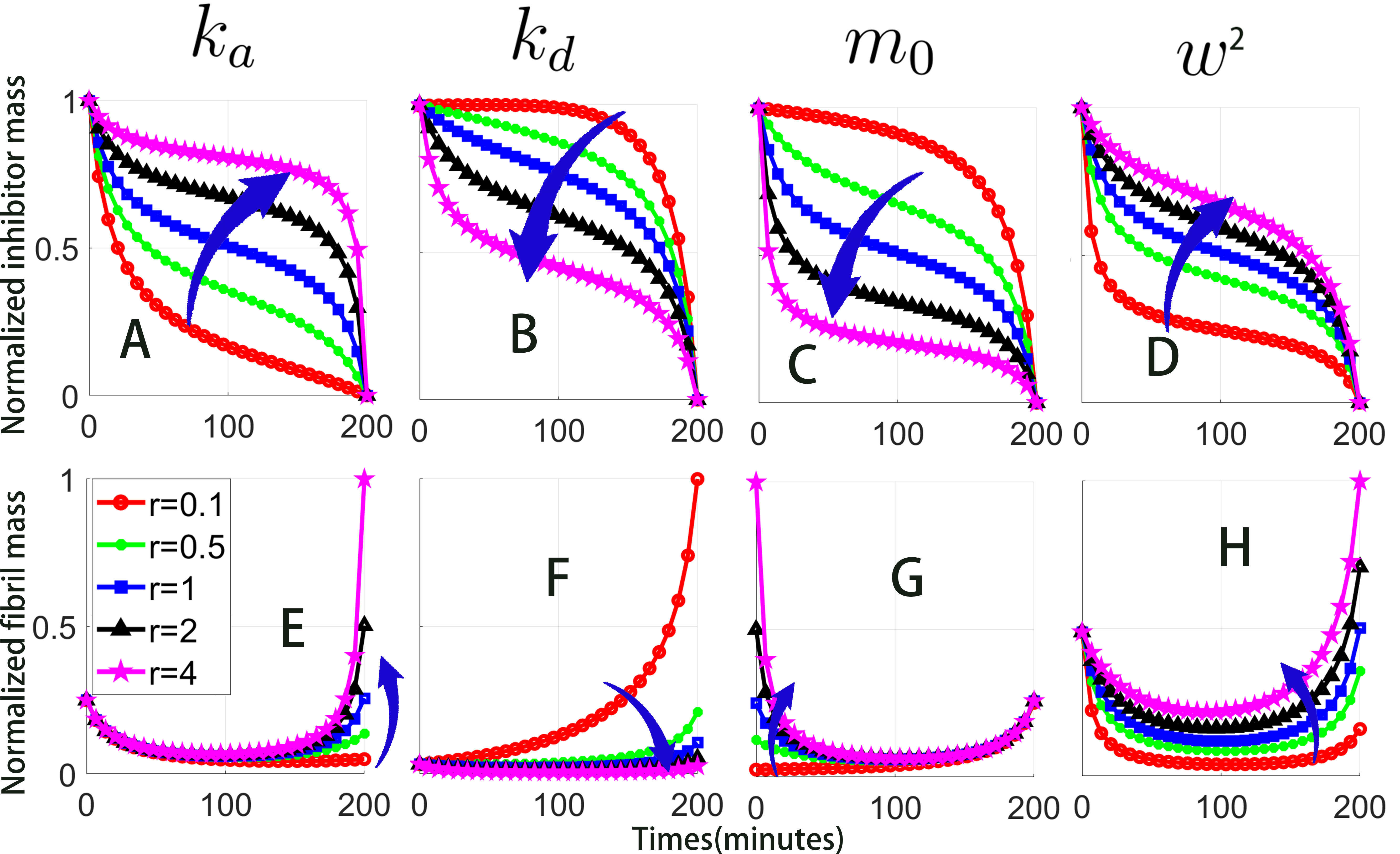}
		\centering
		\caption{Influence of model parameters on the optimal trajectories of inhibitors and fibrils. In each subplot, only the remarked parameter is changed by $r$ times $(r=0.1, 0.5, 1, 2, 4)$, while the rest parameters are kept unchanged. Default parameters are set as \(m_0=2.7 \times 10^{-3}\)\(\mathbf{\mu M}\), \(M_{max}=1\)\(\mathbf{\mu M}\), \(k_{a}=5.6 \times 10^{-2}\)\(\min^{-1}\), \(k_d=k_a/m_0\) and \(w^2=1\).}\label{figure_6}
	\end{figure}

	\subsection{Comparison with Traditional Control Strategies}
	
	Referring to the PMP, the optimal control strategy is undoubtedly the one which leads to the minimal objective functional. However, the optimal control strategy usually follows a complicated trajectory, which may bring great troubles in real applications. In this section, we aim to compare the performance of the optimal control strategy with several other traditional strategies and seek for alternative simple control strategies  with acceptable performance.
	
	Several commonly adopted drug dosing strategies are summarized here, including the \emph{lump-sum adding}, meaning adding drugs all at once; \emph{constant adding}, meaning continuously adding drugs at a constant speed; \emph{multiple adding}, meaning adding drugs instantly with equal amount for multiple times; and \emph{periodic adding}, meaning adding drugs for several time periods and for each period drugs are added constantly. Details are listed in Appendix \ref{Ap_B}.
	
	
	
	Intuitively, when restricted to the fibril growth stage, the \emph{lump-sum adding} is a good strategy because it inhibits fibril growth from the beginning. In contrast, the \emph{multiple adding} is preferred once the equilibrium state is reached, since it can effectively reduce the running cost of inhibitors. As demonstrated through the relative difference between the \emph{lump-sum adding} and \emph{multiple adding} (including \emph{twice adding} and \emph{four times adding}) to the \emph{optimal strategy} in Figs. \ref{figure_5}(A) and (B), we can clearly see that \emph{the lump-sum adding} is better as long as \(T\leq1.4 t_{1/2}\). Figs. \ref{figure_5}(E) and (F) further verify our idea that the \emph{multiple adding} becomes better than the \emph{lump-sum adding} after reaching the equilibrium state. However, there is no significant difference in the fibril mass concentration between the two cases as shown in Figs. \ref{figure_5}(C)-(D).

	\begin{figure}[htbp]
		\centering
		\includegraphics[width=4in]{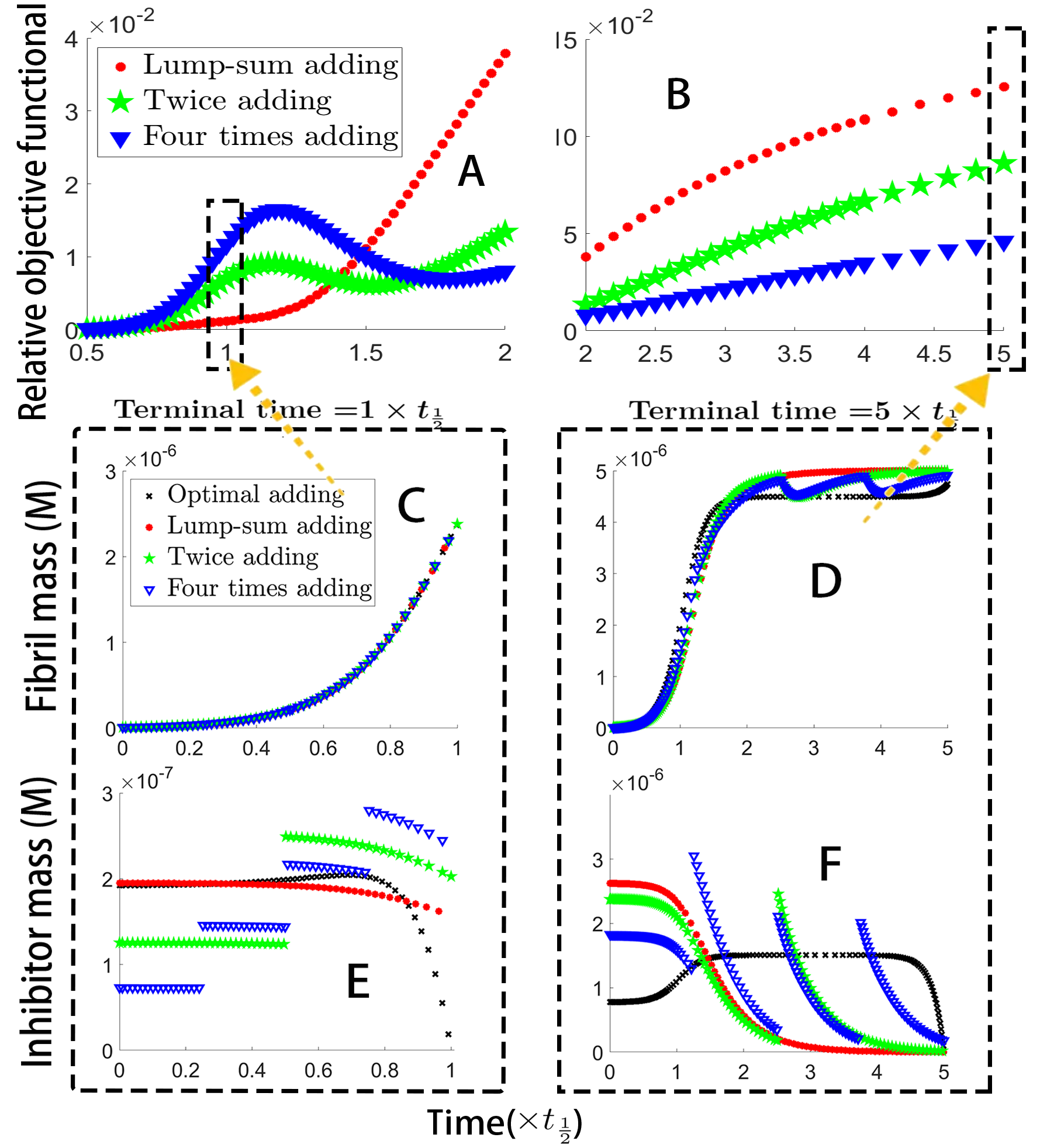}
		\centering
		\caption{Dependence of (A-B) the relative objective functional on the terminal time for the lump-sum adding and multiple-time adding (including {twice adding} and {four times adding}) strategies. Representative optimal trajectories for (C-D) fibril mass concentration and (E-F) inhibitor mass concentration with the different terminal time.}
		\label{figure_5}
	\end{figure}
	
	
	For the purpose of validation, A$\beta$40 control experiments are carefully designed. EGCG, a small molecule extracted from green tea and showing a strong ability in eliminating mature fibrils \cite{Ehrnhoefer2008EGCG}, is adopted for A$\beta$40 inhibition. Five controlled groups, including one \emph{lump-sum adding}, two \emph{twice adding} and two \emph{four-times adding}, have been conducted in Fig. \ref{figure_4}(A) and recorded in Fig. \ref{figure_4}(C).
	
	\begin{figure}[htbp]
		\centering
		\includegraphics[width=5in]{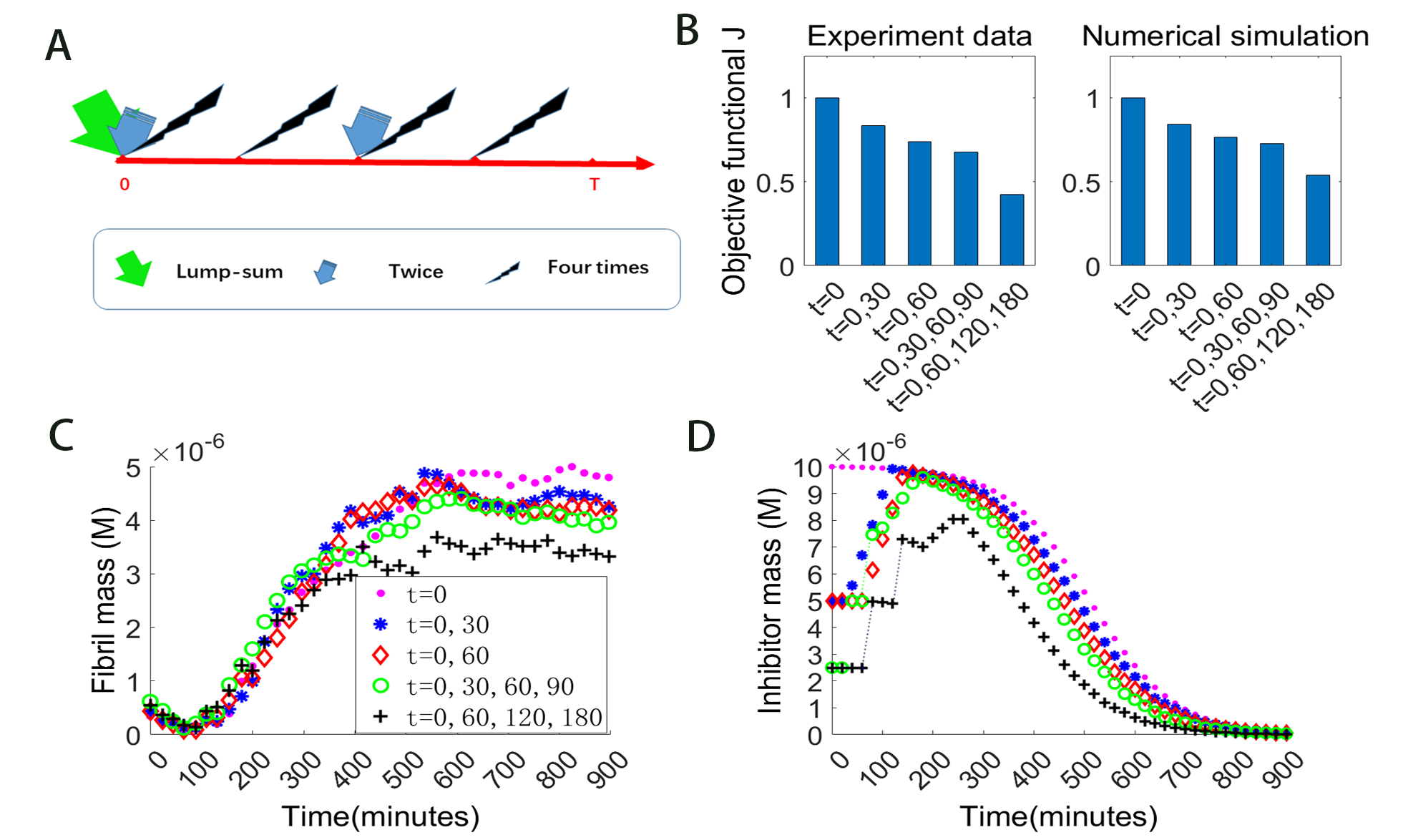}
		\centering
		\caption{(A) Illustration of lump-sum, twice adding and four-time adding strategies. (B) The objective functional $J[u(t)]$ for different drug-dosing  strategies calculated based on either experiment data (left panel) or theoretical predictions (right panel). The adding time is marked below. The corresponding trajectories of (C) fibril mass concentration measured in experiment and (D) inhibitor mass concentration predicted by the optimal control theory are shown respectively. }
		\label{figure_4}
	\end{figure}
	
	\section{Conclusion}
	In this paper, a general theory for the optimal control problem of amyloid aggregation is formulated. With respect to the modified Logistics model with degradation caused by inhibitors, and a general quadratic objective functional including contributions from both fibrils and inhibitors, the optimal control problem is transformed into a group of backward ordinary differential equations by the PMP. It is then solved either analytically under linear approximation or numerically by the shooting method. The influence of non-dimensional parameters, \({M_{max}k_{d}}/{k_{a}}\) representing the strength of inhibitors against fibrillation and \(w^2\) representing the tolerance of inhibitors, on the optimal control trajectories is explored in detail. Finally, several commonly used drug dosing strategies are numerically compared with the theoretically predicted optimal control strategy. We find that for a long-term disease treatment, multiple-time additions are better, while for a short-term treatment, single high-dose treatment is preferred. Our conclusion is further validated through carefully designed experiments of EGCG against A$\beta$ fibrillation.
	
	It should be noted that our current model is an over simplification of the real biochemical kinetics, not only for the amyloid fibrillation but also for the inhibitor acting processes. More detailed kinetic models, especially those based on microscopic mechanisms, need to be further considered in order to provide a more reasonable physical picture. Besides, other combinations of amyloid proteins and inhibitors with different mechanisms are worthy of a systematic examination. It would help to answer the general problem that how the optimal control strategy varies with the system setup which is crucial for clinic practice. We hope our work can raise further interest on the optimal usage of amyloid inhibitors.

	\section*{Appendix}
\setcounter{equation}{0}
\renewcommand\theequation{A.\arabic{equation}}
	\subsection{Proof of  main theorems}\label{Ap_U-C}
	\subsubsection{Some related lemmas on optimal control}
	The Bolza problem of optimal control reads
	\begin{equation}\label{opti_prob}
		\begin{array}{l}
			\inf_{u(t)} J[u(t)]=\Phi\left[X(T)\right]+\int_{0}^{T} L\left[X\left(t\right), u\left(t\right), t\right] d t  \\
			\text{subject to} \\
			\dot{X}(t)=f(X(t),u(t),t),\quad t \in {[0,T]},\quad X(0)=x_0.
		\end{array}
	\end{equation}
	The \emph{Pontryagin's Maximum Principle} (PMP) provides a necessary condition, whose concrete form is stated as follows.
	\begin{lemma}\label{PMP}\cite{pontryagin1987mathematical}
		(PMP) Let \(u^*(t) \in \Theta \subset \mathbb{R}^{m}\) be a bounded, measurable and admissible control that optimizes Eq.(\ref{opti_prob}), with \(\Theta\) be the control set, and \(X^*\) be its corresponding state trajectory. Define a Hamiltonian
		\[H(X,u,\lambda,t)=\lambda^T f(X,u,t)-L(X,u,t),\]
		where $\lambda \in \mathbb{R}^d$. Then there exists an absolutely continuous process \(\lambda(t)\) such that
		\begin{eqnarray}
			&&\dot{X^*}(t)=\frac{\partial H(X^*(t),u^*(t),\lambda^*(t),t)}{\partial \lambda},\quad X^*(0)=x_0
			\label{stateeq}
			\\
			&&\dot{\lambda}^*(t)=-\frac{\partial H(X^*(t),u^*(t),\lambda^*(t),t)}{\partial X},\quad \lambda^*(T)=-\frac{\partial \Phi(X^*(T))}{\partial X}
			\label{costateeq}
			\\
			&&H(X^*(t),u^*(t),\lambda^*(t),t) \ge H(X^*(t),u(t),\lambda^*(t),t),\quad \nonumber\\
			&&\forall u\in \Theta \quad \text{and} \quad a.e. ~t\in [0,T]\nonumber
		\end{eqnarray}
	\end{lemma}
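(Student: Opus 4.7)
The plan is to prove the PMP by the classical needle-variation argument of Pontryagin, as streamlined by McShane. The first step is to reduce the Bolza problem to Mayer form: introduce the auxiliary state $X_{0}(t)=\int_{0}^{t}L(X(s),u(s),s)\,ds$, so that the augmented state $\widetilde{X}=(X_{0},X)\in\mathbb{R}^{1+d}$ evolves by $\dot{\widetilde{X}}=\widetilde{f}$ with $\widetilde{f}=(L,f)$, and the cost becomes the pure terminal functional $\widetilde{\Phi}[\widetilde{X}(T)]=\Phi[X(T)]+X_{0}(T)$. Under the standard Caratheodory assumptions on $f$ and $L$ (Lipschitz in the state, measurable in $t$, continuous in $u$), every admissible control yields a unique absolutely continuous trajectory, and the reduction preserves optimality of $u^{*}$.

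The second step is the needle variation. Fix a Lebesgue point $\tau\in(0,T)$ of $u^{*}$ and any admissible value $v\in\Theta$. For small $\varepsilon>0$, define $u_{\varepsilon}$ by replacing $u^{*}$ with $v$ on $[\tau-\varepsilon,\tau]$, and let $\widetilde{X}_{\varepsilon}$ denote the corresponding trajectory. A Gronwall estimate on $[\tau,T]$ produces the first-order expansion
\[
\delta\widetilde{X}(T)=\varepsilon\,\Psi(T,\tau)\bigl[\widetilde{f}(\widetilde{X}^{*}(\tau),v,\tau)-\widetilde{f}(\widetilde{X}^{*}(\tau),u^{*}(\tau),\tau)\bigr]+o(\varepsilon),
\]
where $\Psi(\cdot,\tau)$ is the fundamental matrix of the variational equation $\dot y=(\partial\widetilde{f}/\partial\widetilde{X})(\widetilde{X}^{*}(t),u^{*}(t),t)\,y$. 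Now introduce the costate $\lambda(t)$ as the unique absolutely continuous solution of the backward linear ODE \eqref{costateeq} with the free-endpoint transversality $\lambda(T)=-\partial\Phi(X^{*}(T))/\partial X$, and package it as $\widetilde{\lambda}=(-1,\lambda)$. By direct differentiation, $t\mapsto\widetilde{\lambda}(t)^{\top}\Psi(t,\tau)$ is constant in $t$, so the first-order change in the cost collapses to
\[
\delta J=\varepsilon\bigl[H(X^{*}(\tau),u^{*}(\tau),\lambda(\tau),\tau)-H(X^{*}(\tau),v,\lambda(\tau),\tau)\bigr]+o(\varepsilon).
\]
Optimality of $u^{*}$ forces $\delta J\ge 0$ for every admissible $v$, which is precisely the pointwise Hamiltonian maximum condition at $\tau$.

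The hard part will be twofold. First, justifying the $o(\varepsilon)$ remainder requires that $\tau$ be a Lebesgue point of the integrable map $s\mapsto\widetilde{f}(\widetilde{X}^{*}(s),u^{*}(s),s)$; this is where the merely measurable nature of $u^{*}$ becomes delicate, and a careful Gronwall bound is needed to show $\|\delta\widetilde{X}\|_{\infty}=O(\varepsilon)$ on $[\tau,T]$ so that the linearization is valid. Second, upgrading the conclusion from \emph{the maximum condition holds at every Lebesgue point} to \emph{it holds for almost every $t$ and every $v\in\Theta$ simultaneously} requires a separability argument: choose a countable dense subset $\{v_{k}\}\subset\Theta$, discard the measure-zero sets of non-Lebesgue points for the countably many scalar maps $t\mapsto H(X^{*}(t),v_{k},\lambda(t),t)-H(X^{*}(t),u^{*}(t),\lambda(t),t)$, and pass to the limit using continuity of $H$ in $u$. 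Everything else — existence, absolute continuity, and boundedness of $\lambda$ on $[0,T]$ — follows from standard linear-ODE theory applied to the adjoint equation along $X^{*}$.
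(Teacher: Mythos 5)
The paper does not prove this lemma at all: it is quoted verbatim as the classical Pontryagin Maximum Principle with a citation to Pontryagin's monograph, so there is no in-paper argument to compare against. Your outline is the standard needle-variation (McShane--Pontryagin) proof, and its architecture is sound: the Bolza-to-Mayer reduction via the running-cost state $X_0$, the first-order expansion of the perturbed trajectory through the fundamental matrix of the variational equation, the identification of the costate as the solution of the adjoint equation with the free-endpoint transversality condition, the invariance of $t\mapsto\widetilde{\lambda}(t)^{\top}\Psi(t,\tau)$, and the sign bookkeeping that turns $\delta J\ge 0$ into the maximum condition all check out (note that with $H=\lambda^{\top}f-L$ the augmented costate component $\lambda_0\equiv-1$ is constant precisely because $\widetilde f$ does not depend on $X_0$, which is worth stating explicitly). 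You have also correctly isolated the two genuinely delicate points, namely the $o(\varepsilon)$ remainder at Lebesgue points of $s\mapsto\widetilde f(\widetilde X^*(s),u^*(s),s)$ and the separability argument needed to pass from "each fixed $v$ at a.e.\ $\tau$" to "a.e.\ $\tau$ for all $v\in\Theta$ simultaneously." What remains is execution rather than ideas: the Gronwall bound, the Lebesgue differentiation step, and the regularity hypotheses on $f$ and $L$ (which the lemma as stated leaves implicit) would all need to be written out for a complete proof. It is also worth observing that for the paper's actual application --- scalar state, control entering affinely in $f$ and quadratically in $L$, compact control set $[0,u_{max}]$ --- a far more elementary first-variation computation would suffice, which is presumably why the authors were content to cite the general result.
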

	\noindent Here Eq.\eqref{stateeq} reduces to the state equation under the optimal  control, while the co-state $\lambda^*$ evolves backward according to Eq.\eqref{costateeq} with a fixed  terminal state $\lambda^*(T)$.
	
	\begin{lemma}\label{comparison theorem}\cite{michel1901maniere}
		(Comparison Theorem) Let both functions $f(x,y)$ and $F(x,y)$ be continuous in the plane region $G$ and satisfy the inequality \[f(x,y)<F(x,y),\qquad (x,y) \in G.\] Let $y=\phi(x)$ and $y=\Phi(x)$ be solutions to the initial value problems \[\frac{d y}{d x}=f(x,y),\qquad y(x_{0})=y_{0}\] and \[\frac{d y}{d x}=F(x,y),\qquad y(x_{0})=y_{0}\] respectively on the interval $a<x<b$, where $(x_{0},y_{0})\in G$. Then we have \[
		\begin{cases}
			\phi(x)<\Phi(x),&\qquad x_{0}<x<b,\\
			\phi(x)>\Phi(x),&\qquad a<x<x_{0}.
		\end{cases}
		\]
	\end{lemma}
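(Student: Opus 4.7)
The plan is to prove both inequalities by analysing the difference $g(x) := \Phi(x) - \phi(x)$. Because $\phi$ and $\Phi$ are $C^{1}$ solutions of ODEs with continuous right-hand sides, $g$ is continuously differentiable on $(a,b)$. At the common initial point I have $g(x_{0}) = 0$ and
$$g'(x_{0}) = F(x_{0},y_{0}) - f(x_{0},y_{0}) > 0$$
by the strict hypothesis $f < F$. Consequently there is a one-sided neighbourhood $(x_{0}, x_{0}+\delta)$ on which $g > 0$, which establishes the desired inequality locally to the right of $x_{0}$; symmetrically, $g < 0$ on $(x_{0}-\delta, x_{0})$, giving the left-hand inequality locally.

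The right-hand assertion on the whole interval $(x_{0},b)$ will follow by a first-contact contradiction argument. Suppose the conclusion fails, set $E := \{x \in (x_{0},b) : g(x) \le 0\}$, and let $x^{*} := \inf E$. The local positivity of $g$ just to the right of $x_{0}$ gives $x^{*} > x_{0}$, and continuity together with the definition of $x^{*}$ yields $g(x^{*}) = 0$ and $g(x) > 0$ on $(x_{0}, x^{*})$. Hence the left derivative at $x^{*}$ satisfies
$$g'_{-}(x^{*}) = \lim_{x \uparrow x^{*}} \frac{g(x) - g(x^{*})}{x - x^{*}} \le 0.$$
On the other hand, using the key cancellation $\phi(x^{*}) = \Phi(x^{*})$ in the ODEs I can evaluate $g'$ directly:
$$g'(x^{*}) = F(x^{*}, \Phi(x^{*})) - f(x^{*}, \phi(x^{*})) = F(x^{*}, \Phi(x^{*})) - f(x^{*}, \Phi(x^{*})) > 0,$$
which contradicts the previous line. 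Therefore $\phi(x) < \Phi(x)$ throughout $(x_{0},b)$.

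The case $a < x < x_{0}$ is handled by the same argument run in reverse time. Either I repeat the proof with the signs of the one-sided derivatives flipped (a first-contact point $x^{*} < x_{0}$ would force $g'_{+}(x^{*}) \ge 0$ yet the direct evaluation still gives $g'(x^{*}) > 0$), or more cleanly I apply the change of variable $\tilde{x} := 2x_{0} - x$, under which $\tilde{\phi}(\tilde{x}) := \phi(2x_{0}-\tilde{x})$ and $\tilde{\Phi}(\tilde{x}) := \Phi(2x_{0}-\tilde{x})$ solve initial value problems with right-hand sides $-f$ and $-F$ satisfying $-F < -f$, reducing the problem to the right-sided case already proved (and reversing the inequality in the conclusion, as required).

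The main subtlety, I expect, lies in justifying the first-contact point: one must use both the strict initial derivative inequality (to guarantee $x^{*} > x_{0}$) and continuity of $g$ (to guarantee $g(x^{*}) = 0$ rather than merely $g(x^{*}) \le 0$). The strictness of the hypothesis $f < F$ is used essentially at the final step $g'(x^{*}) > 0$; under the weaker assumption $f \le F$ the same scheme only yields the weak comparison $\phi \le \Phi$. Everything else—$C^{1}$ regularity of $g$, the limit computation of the one-sided derivative, and the algebraic cancellation at $x^{*}$—is routine and should not require any additional machinery.
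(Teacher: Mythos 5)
Your proof is correct, but note that the paper does not actually prove this lemma at all: it is stated as a cited classical result (Lemma~\ref{comparison theorem}, with a reference) and used as a black box in Theorem~\ref{bvpeu} and in the proof of Theorem~\ref{main theorem1}. So there is no in-paper argument to compare against; what you have supplied is a self-contained proof of the textbook strict comparison theorem. Your first-contact argument is the standard one and all the steps check out: the strict inequality $g'(x_0)>0$ gives local positivity of $g=\Phi-\phi$ to the right; the infimum $x^*$ of the bad set satisfies $g(x^*)=0$ by two-sided continuity (approximation from within $E$ gives $g(x^*)\le 0$, positivity on $(x_0,x^*)$ gives $g(x^*)\ge 0$); the difference quotients from the left are negative so $g'(x^*)\le 0$, while the cancellation $\phi(x^*)=\Phi(x^*)$ and the hypothesis $f<F$ force $g'(x^*)>0$ --- a contradiction. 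Since $g$ is $C^1$, identifying the left derivative with $g'(x^*)$ is legitimate. The time-reversal reduction for $a<x<x_0$ is also fine. The only point worth making explicit is the implicit assumption that the graphs of $\phi$ and $\Phi$ stay inside $G$ on $(a,b)$, so that $F(x^*,\Phi(x^*))$ and $f(x^*,\Phi(x^*))$ are actually covered by the hypothesis $f<F$ on $G$; this is part of what ``solutions on the interval'' in the region $G$ means, but it deserves a sentence. Your closing observation --- that strictness of $f<F$ is what rules out tangential contact, and that $f\le F$ would only yield $\phi\le\Phi$ --- is accurate.
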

	
	\begin{lemma}\label{existence uniqueness of bvp}\cite{waltman1971existence}
		 Consider the boundary value problem
		 \begin{eqnarray}
		 	&&\frac{d x}{d t}=f_{1}(t,x,y), \qquad \frac{d y}{d t}=f_{2}(t,x,y),\label{bvp1}\\
		 	&&Ax(a)+By(a)=c_{1},\label{bvp2}\\
		 	&&Cx(b)+Dy(b)=c_{2}, \qquad b>a,\label{bvp3}
		 \end{eqnarray}
	 where the $f_{i}(t,x,y)$ satisfy a Lipschitz condition in the form of
	 \begin{equation}
	 	\begin{aligned}
	 		L_{1}(x_{1}-x_{2}) \leq f_{1}(t,x_{1},y)-f_{1}(t,x_{2},y)\leq L_{2}(x_{1}-x_{2}),\quad \mathtt{if} \quad x_{1} \ge x_{2},\\
	 		K_{1}(y_{1}-y_{2}) \leq f_{1}(t,x,y_{1})-f_{1}(t,x,y_{2})\leq K_{2}(y_{1}-y_{2}),\quad \mathtt{if} \quad y_{1} \ge y_{2},\\
	 		M_{1}(x_{1}-x_{2}) \leq f_{2}(t,x_{1},y)-f_{2}(t,x_{2},y)\leq M_{2}(x_{1}-x_{2}),\quad \mathtt{if} \quad x_{1} \ge x_{2},\\
	 		N_{1}(y_{1}-y_{2}) \leq f_{2}(t,x,y_{1})-f_{2}(t,x,y_{2})\leq N_{2}(y_{1}-y_{2}),\quad \mathtt{if} \quad y_{1} \ge y_{2}.
	 	\end{aligned}
	 \end{equation}
 	And define \[
 	\begin{split}
 		P_{1}(u,v)&=L_{1}u+K_{1}v \quad \mathtt{if} \quad uv \ge 0,\\
 		&=L_{2}u+K_{1}v \quad \mathtt{if} \quad uv < 0;\\
 		P_{2}(u,v)&=M_{2}u+N_{2}v \quad \mathtt{if} \quad uv \ge 0,\\
 		&=M_{2}u+N_{1}v \quad \mathtt{if} \quad uv < 0;\\
 		Q_{1}(u,v)&=L_{2}u+K_{2}v \quad \mathtt{if} \quad uv \ge 0,\\
 		&=L_{1}u+K_{2}v \quad \mathtt{if} \quad uv < 0;\\
 		Q_{2}(u,v)&=M_{1}u+N_{1}v \quad \mathtt{if} \quad uv \ge 0,\\
 		&=M_{1}u+N_{2}v \quad \mathtt{if} \quad uv < 0.\\
 	\end{split}
\]
The comparison system will be
	\begin{equation}\label{eee}
		\frac{d u}{d t}=P_{1}(u,v),\qquad \frac{d v}{d t}=P_{2}(u,v)
	\end{equation}
and
	\begin{equation}\label{eeeee}
		\frac{d w}{d t}=Q_{1}(w,z),\qquad \frac{d z}{d t}=Q_{2}(w,z).
	\end{equation}
	Define $p(\alpha,\beta)$ to be the time it takes for a solution of \eqref{eee} which begins at $t=0$ at a point on $y=\alpha x$ to first reach a point on the line $y=\beta x$ (if a solution does reach it), or to be $+\infty$ if no solution beginning on $y=\alpha x$ at $t=0$ reaches $y=\beta x$ for any $t \ge 0$. Define $q(\alpha,\beta)$ in the same way for the system \eqref{eeeee}.
	
	If $b-a<M$, where
	\[M=\min\left[p(\tan^{-1}-A/B,\tan^{-1}-C/D),q(\tan^{-1}-A/B,\tan^{-1}-C/D)\right],\]
	then there exists a unique solution to the boundary value problem in Eqs.\eqref{bvp1}, \eqref{bvp2} and \eqref{bvp3}.
	\end{lemma}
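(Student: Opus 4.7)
My plan is to reduce the boundary value problem to a one-parameter shooting problem and show that the shooting map has a unique zero. The condition $Ax(a)+By(a)=c_{1}$ is one linear equation on $(x(a),y(a))\in\mathbb{R}^{2}$, so it cuts out a one-parameter family of admissible initial data, which I parametrize by a scalar $s\in\mathbb{R}$. Let $(x(t;s),y(t;s))$ denote the (unique, by the Lipschitz hypothesis) solution of the IVP for Eq.\eqref{bvp1} with initial data indexed by $s$, and define the shooting map
\[
\Phi(s):=Cx(b;s)+Dy(b;s)-c_{2}.
\]
Existence and uniqueness of a solution of the BVP is then equivalent to showing that $\Phi$ has exactly one zero in $\mathbb{R}$.

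\textbf{Continuity and structural properties of $\Phi$.} Standard continuous-dependence theorems for Lipschitz ODEs give that $s\mapsto (x(b;s),y(b;s))$, and hence $\Phi$, is continuous on $\mathbb{R}$. It then suffices to establish (i) strict monotonicity of $\Phi$ in $s$, and (ii) $\Phi(s)\to\pm\infty$ as $s\to\pm\infty$, so that the intermediate value theorem together with injectivity delivers the unique zero.

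\textbf{Monotonicity via rotation bounds.} For two parameters $s_{1}\ne s_{2}$, let $X(t):=x(t;s_{1})-x(t;s_{2})$ and $Y(t):=y(t;s_{1})-y(t;s_{2})$. Subtracting the two copies of Eq.\eqref{bvp1} and using the one-sided Lipschitz bounds involving $L_{i},K_{i},M_{i},N_{i}$, one sees that the instantaneous velocity $(\dot X,\dot Y)$ lies in the rectangle spanned (sign-by-sign, which is exactly why the definitions of $P_{1},P_{2},Q_{1},Q_{2}$ split according to the sign of $uv$) by the two extremal linear vector fields $(P_{1},P_{2})$ and $(Q_{1},Q_{2})$. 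Passing to polar coordinates $(X,Y)=(r\cos\theta,r\sin\theta)$, the angular speed $\dot\theta$ of $(X,Y)$ is pinched between the angular speeds of the two comparison linear systems. Lemma~\ref{comparison theorem} applied to $\theta(t)$ (quadrant by quadrant, to handle sign switches) then shows that the time needed for $\theta$ to traverse from the line $Ax+By=0$, i.e.\ $\theta=\tan^{-1}(-A/B)$, to the line $Cx+Dy=0$, i.e.\ $\theta=\tan^{-1}(-C/D)$, is at least the minimum of the two extremal traversal times, which by definition is $M$. Since $(X(a),Y(a))$ lies on $Ax+By=0$ and $b-a<M$, the vector $(X(b),Y(b))$ cannot have reached $Cx+Dy=0$, so $CX(b)+DY(b)\ne 0$. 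The sign of this quantity is independent of $s_{1},s_{2}$ because the rotation stays in a definite half-plane, so $\Phi$ is strictly monotone.

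\textbf{Unboundedness and main obstacle.} For (ii), the same one-sided Lipschitz inequalities applied directly to $(x(t;s),y(t;s))$ yield two-sided exponential bounds on its modulus, and the rotation bound above confines its terminal argument to a fixed open sector that does not contain the line $Cx+Dy=0$; together these force $|\Phi(s)|\to\infty$ as $|s|\to\infty$. Combined with monotonicity and continuity, this gives existence and uniqueness of the zero. The main obstacle I expect is making the rotation bound fully rigorous: one has to justify the comparison of $\dot\theta$ with the extremal angular speeds piecewise on each open quadrant of the $(X,Y)$-plane (which is precisely why $P_{i},Q_{i}$ are defined case-wise according to the sign of $uv$), and then stitch these pieces together at the coordinate axes, handling the possibility that the difference trajectory $(X(t),Y(t))$ momentarily hits an axis or that $r(t)$ vanishes. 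Once this quadrant-wise comparison is in place, continuity, IVT, and the exponential bounds finish the argument.
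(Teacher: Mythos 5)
This lemma is imported by the paper from the cited reference (Waltman, 1971) and is not proved anywhere in the manuscript, so there is no in-paper argument to compare against; what you have written is, in effect, a reconstruction of the original proof. Your outline is the standard (and essentially the original) route: parametrize the one-dimensional affine family of initial data satisfying $Ax(a)+By(a)=c_{1}$, define the shooting map $\Phi$, and control the \emph{difference} of two solutions in polar coordinates, using the one-sided Lipschitz bounds to pinch its angular velocity between those of the two extremal linear fields $(P_{1},P_{2})$ and $(Q_{1},Q_{2})$ --- this is exactly what the sign-dependent case split in the definitions of $P_{i},Q_{i}$ is engineered for, and it is why $p$ and $q$ are defined as first-passage times between the lines $Ax+By=0$ and $Cx+Dy=0$. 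The conclusion that $b-a<M$ prevents the difference vector from rotating onto $Cx+Dy=0$ gives uniqueness and sign-definiteness of $\Phi(s_{1})-\Phi(s_{2})$, hence strict monotonicity; your coercivity step is also sound once you note that $r(t)=|(X,Y)(t)|$ cannot vanish (forward and backward uniqueness for the Lipschitz IVP), that $r(b)\geq r(a)e^{-c(b-a)}$, and that the terminal angle is confined to a compact arc on which $|C\cos\theta+D\sin\theta|$ is bounded below. The one genuine piece of remaining work is the one you flag yourself: the quadrant-by-quadrant justification of the angular comparison (via Lemma~\ref{comparison theorem}) and the stitching at the coordinate axes. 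That is a real but routine technicality, handled in the cited source; your proposal contains no wrong step.
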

	We prove that the following boundary value problem exists a unique solution.
	\begin{theorem}\label{bvpeu}
		For $t \in [0,T]$, the boundary value problem,
		\begin{equation}\label{bvppp}
			\left\{\begin{array}{cl}
				\dot{M}(t)=&k_{a}M(t)(1-\frac{M(t)}{M_{max}})-k_{d} u(t) M(t),\\
				\dot{u}(t)=&\frac{k_{a}}{M_{max}}u(t)M(t) - \frac{k_{d}}{w^2}{M(t)}^{2},~t< T,
				\\M(0)=&{m_0}>0, \quad u(T)=0,
			\end{array}\right.
		\end{equation}
		exists a unique solution.
	\end{theorem}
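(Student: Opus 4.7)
The plan is to recast the BVP \eqref{bvppp} into the framework of Lemma \ref{existence uniqueness of bvp}, with $(x,y) = (M,u)$ and $[a,b] = [0,T]$; the boundary matrices reduce to $(A,B,c_{1}) = (1,0,m_{0})$ and $(C,D,c_{2}) = (0,1,0)$, so the two separating lines are the $u$-axis (corresponding to $\tan^{-1}(-A/B) = -\pi/2$) and the $M$-axis ($\tan^{-1}(-C/D) = 0$). Existence and uniqueness then follow once I confine all candidate solutions to a compact rectangle, verify the two-sided slope bounds required by the lemma, and check $T < \min(p,q)$ for the transit times of the resulting comparison systems \eqref{eee}--\eqref{eeeee}.

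\textbf{A priori rectangle.} First I would show that any solution of \eqref{bvppp} is trapped in $\Omega := [0, M_{max}] \times [0, k_{d} M_{max}^{2} T/w^{2}]$. For $M$, flow invariance of the faces $M=0$ and $M=M_{max}$ is immediate from $\dot M|_{M=0} = 0$ and $\dot M|_{M=M_{max}} = -k_{d} u M_{max} \le 0$ whenever $u \ge 0$. For $u$, multiplying $\dot u = (k_{a}/M_{max}) u M - (k_{d}/w^{2}) M^{2}$ by the integrating factor $\exp\bigl(-\int_{0}^{t}(k_{a}/M_{max})M(s)\,ds\bigr)$ produces a monotone decreasing quantity whose value at $T$ is zero, forcing $u(t) \ge 0$; and the crude bound $\dot u \ge -(k_{d}/w^{2}) M_{max}^{2}$ integrated backward from $u(T)=0$ yields $u(t) \le k_{d} M_{max}^{2}(T-t)/w^{2}$. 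The standing hypothesis $u_{max} > k_{d} M_{max}^{2} T/w^{2}$ is precisely what guarantees $\Omega \subset [0,M_{max}] \times [0,u_{max}]$, so the pointwise constraints \eqref{constraints} remain inactive along any solution.

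\textbf{Lipschitz bounds.} On the compact set $\Omega$ the right-hand sides $f_{1}(t,M,u) = k_{a} M(1-M/M_{max}) - k_{d} u M$ and $f_{2}(t,M,u) = (k_{a}/M_{max}) u M - (k_{d}/w^{2}) M^{2}$ are $C^{1}$, and direct differentiation supplies the two-sided slope estimates
\[
\partial_{M} f_{1} \in [-k_{a} - k_{d}\bar u,\, k_{a}], \qquad \partial_{u} f_{1} \in [-k_{d} M_{max},\, 0],
\]
\[
\partial_{M} f_{2} \in \left[-\tfrac{2 k_{d} M_{max}}{w^{2}},\, \tfrac{k_{a} \bar u}{M_{max}}\right], \qquad \partial_{u} f_{2} \in [0,\, k_{a}],
\]
with $\bar u := k_{d} M_{max}^{2} T/w^{2}$. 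These identify the constants $L_{i}, K_{i}, M_{i}, N_{i}$ in Lemma \ref{existence uniqueness of bvp} and pin down the two planar piecewise-linear comparison vector fields $(P_{1}, P_{2})$ and $(Q_{1}, Q_{2})$.

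\textbf{Main obstacle.} The hardest step is the comparison-theoretic inequality $T < \min(p, q)$, where $p$ and $q$ are the transit times of the two comparison flows from the $u$-axis to the $M$-axis. Because each comparison system is autonomous, planar and piecewise linear, its orbits integrate explicitly wedge by wedge, and the transit times emerge as finite sums of logarithms in the Lipschitz constants above. Since $\bar u \le u_{max}$ under the hypothesis, these constants are controlled uniformly, and the resulting transcendental condition on $T$ holds automatically in the same regime that underlies $u_{max} > k_{d} M_{max}^{2} T/w^{2}$; as a fallback it can also be read off from the monotonicity of $u^{*}(T)$ in $u^{*}(0)$ exploited by the shooting argument in Sec.~III. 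Lemma \ref{existence uniqueness of bvp} then delivers both existence and uniqueness of the solution to \eqref{bvppp}, completing the proof.
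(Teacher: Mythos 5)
Your setup mirrors the paper's: the same reduction to Lemma \ref{existence uniqueness of bvp} with $(A,B,c_1)=(1,0,m_0)$, $(C,D,c_2)=(0,1,0)$, the same a priori rectangle $0\le M<M_{max}$, $0\le u< k_d M_{max}^2T/w^2$ obtained via the comparison theorem, and essentially the same Lipschitz constants (yours are in fact written more carefully). The problem is the final, decisive step. You never actually verify the hypothesis $b-a<\min(p,q)$ of the lemma; you assert that the transit times are ``finite sums of logarithms'' and that the inequality ``holds automatically in the same regime that underlies $u_{max}>k_dM_{max}^2T/w^2$.'' Neither claim is substantiated, and the second cannot be right as stated: $u_{max}$ does not appear in Theorem \ref{bvpeu} at all (it only serves, in Theorem \ref{main theorem1}, to keep the box constraint inactive), so no condition on $u_{max}$ can force a bound on $T$ relative to the comparison-flow transit times. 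The fallback you offer --- the monotonicity of the shooting map $u(0)\mapsto u(T)$ --- is circular here, since that monotonicity is a property one would have to prove separately and is essentially equivalent to the uniqueness being established.

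The correct resolution, which the paper gives tersely, is qualitative rather than quantitative: with the chosen constants ($K_1=0$, $N_2=0$, $M_1=0$, $N_1=k_a$, etc.), a trajectory of either comparison system started on the ray at angle $-\pi/2$ (the negative $v$-axis) can never rotate to the ray at angle $0$ (the positive $u$-axis). For instance, for the system \eqref{eee} the point $(0,v)$ with $v<0$ satisfies $\dot u=K_1v=0$ and $\dot v=N_2v=0$, so it is stationary; for \eqref{eeeee} the trajectory enters the third quadrant and the sign structure of $Q_1,Q_2$ keeps it there. Hence $p(-\pi/2,0)=q(-\pi/2,0)=+\infty$ and the condition $T<\min(p,q)$ holds for \emph{every} $T$, with no transcendental inequality to check. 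Without an argument of this kind (or an explicit computation showing the transit times exceed $T$), your proof has a genuine gap at exactly the step where the lemma's hypotheses must be verified.
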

	\begin{proof}
		Our main idea is to use Lemma \ref{existence uniqueness of bvp}. By using Lemma \ref{comparison theorem}, Eq.\eqref{bvppp} implies that $0 \le M(t)<M_{max}$ and $0 \le u(t)<k_{d}M_{max}^2T/w^2$.
		Applying Lemma \ref{existence uniqueness of bvp} to Eq.\eqref{bvppp} with $a=0, b=T, A=1, B=0, c_{1}=m_{0}, C=0, D=1 \text{ and } c_{2}=0$, we have  $\tan^{-1}(-A/B)=-\pi/2$, $\tan^{-1}(-C/D)=0$. We can choose $L_{1}=-k_a-(k_{d}M_{max})^2T/w^2$, $L_{2}=k_{a}$, $K_{1}=0$, $K_{2}=k_{d}M_{max}$, $M_{1}=0$, $M_{2}=-2k_{d}M_{max}$, $N_{1}=k_{a}$ and $N_{2}=0$ in the above Lipschitz condition. Considering Eq.\eqref{eee} and Eq.\eqref{eeeee}, if $(u(0), v(0))=(1, -\pi/2)$, then $(u(t), v(t)) \neq (1, 0)$, $\forall t>0$. So we have $p(-\pi/2,0)=+\infty$ and $q(-\pi/2,0)=+\infty$. The desired result follows.
	\end{proof}
	
	\subsubsection{Proof of theorem \ref{main theorem1} }
	If our control domain is a compact set, the existence of optimal control problems \cite{kelly2016impact} where the terminal state \(X(T)\) is free from restrictions is obvious.
	
	The optimal control problem satisfying the constraint \eqref{constraints} is given by:
	\begin{equation}\label{OC_c}
		\begin{array}{l}
			\inf_{u(t)\in [0,u_{max}]} J[u(\cdot)]=\int_{0}^{T}[M(t)^2+w^2u(t)^2]d t  \\
			\text{subject to} \\
			\dot{M(t)}=k_{a}M(t)\left(1-\frac{M(t)}{M_{max}}\right)-k_{d} u(t) M(t),\quad M^*(0)={m_0}>0,
		\end{array}
	\end{equation}
	where  $M_{max} \geq M(t) \geq 0, \forall t \in [0,T]$.
	
	According to Lemma \ref{PMP}, an optimal solution to \eqref{OC_c} should satisfy the following equations,
	\begin{eqnarray}
		&&\dot{M^*}(t)=k_{a}M^*-k_{a}(M^{*})^2/M_{max}-k_{d} u^* M^*,\label{PMP1}\\
		&&\dot{\lambda^*}(t)=-k_{a}\lambda^*+2k_{a}\lambda^*M^*/M_{max}+k_{d} u^*\lambda^* +2M^*,~t<T, \label{PMP2}\\
		&&u^*(t) \in {argmax}_{u\in [0,u_{max}]}
		\left[\lambda \left(k_{a}M^*-k_{a}(M^*)^2/M_{max}-k_{d} u M^*\right)-(M^*)^2-w^2u^2 \right], \label{PMP3}\\
		&&M^*(0)={m_0}, \qquad \lambda^* (T)=0.\label{PMP4}
	\end{eqnarray}
	If we do not constrain $u^{*}(t)$, we get $u^*(t)=-k_{d}\lambda^*M^*/2w^2$ by Eq.\eqref{PMP3}. Further, we can get
	\begin{equation}\label{fzeq}
		\left\{\begin{array}{cl}
			\dot{M^*}(t)=&k_{a}M^*(t)(1-\frac{M^*(t)}{M_{max}})-k_{d} u^*(t) M^*(t),\\
			\dot{u^*}(t)=&\frac{k_{a}}{M_{max}}u^*(t)M^*(t) - \frac{k_{d}}{w^2}{M^*(t)}^{2},~t< T,
			\\M^*(0)=&{m_0}, \quad u^* (T)=0.
		\end{array}\right.
	\end{equation}
	From Eq.\eqref{fzeq}, it can be asserted that $u^* \ge 0$, otherwise it contradicts with $u^*(T)=0$. By Lemma \ref{comparison theorem}, we can further get $u^*(t)<k_{d}M_{max}^2T/w^2$. To sum up, it is known that $0 \le u^*<k_{d}M_{max}^2T/w^2$. So far, we have proved that the optimal control and state of problem \eqref{OC_c} must satisfy Eq.\eqref{fzeq}. By Theorem \ref{bvpeu}, we know that this solution exists and is unique. The desired result follows.
	\subsubsection{A Feedback Control}\label{analproof}
		Equation \eqref{PMP_optimal solution} can be transformed into a feedback control given by
		\begin{equation}
			\left\{\begin{array}{cl}
				\dot{M^*}(t)=&k_{a}M^*(t)(1-\frac{M^*(t)}{M_{max}})-k_{d}M^*(t)u^*(t) ,\\
				u^*(t)=&\frac{k_{a}(M_{max}-M^*(t))}{M_{max}k_{d}}\pm\sqrt{\frac{{M^*(T)}^2-{M^*(t)}^2}{w^2}+\left( \frac{k_{a}(M_{max}-M^*(t))}{M_{max}k_{d}}\right)^2},~t< T,
				\\M^*(0)=&{m_0}.
			\end{array}\right.
		\end{equation}
	
	First, we can derive from Eq. \eqref{PMP_optimal solution} that
	\begin{eqnarray*}
		&&\left[ k_{a}M^*(t)(1-\frac{M^*(t)}{M_{max}})-k_{d} u^*(t) M^*(t)\right]du^*+ \left[ \frac{k_{d}}{w^2}{M^*(t)}^{2}-\frac{k_{a}}{M_{max}}u^*(t)M^*(t)\right] dM^{*}=0,\\
		&&\left[ k_{a}(1-\frac{M^*(t)}{M_{max}})-k_{d} u^*(t)\right]du^*+ \left[ \frac{k_{d}}{w^2}{M^*(t)}-\frac{k_{a}}{M_{max}}u^*(t)\right] dM^{*}=0,\\
		&&d\left[  {M^*}^2-w^2{u^*}^2-\frac{2wk_{a}}{M_{max}k_{d}}M^*u^*+\frac{2wk_{a}}{k_{d}}u^*\right] =0,\\
	  &&{M^*}^2-w^2{u^*}^2-\frac{2wk_{a}}{M_{max}k_{d}}M^*u^*+\frac{2wk_{a}}{k_{d}}u^* =const.
	\end{eqnarray*}
	As ${M^*(T)}^2=const$ by $u^{*}(T)=0$, we have
	\begin{equation}
		{u^*}^2-2\frac{k_{a}(M_{max}-M^*)}{k_{d}M_{max}}u^*+\frac{{M^*(T)}^2-{M^*}^2}{w^2}=0,
	\end{equation}
	and
	\begin{equation}
			u^*(t)=\frac{k_{a}(M_{max}-M^*(t))}{M_{max}k_{d}}\pm\sqrt{\frac{{M^*(T)}^2-{M^*(t)}^2}{w^2}+\left( \frac{k_{a}(M_{max}-M^*(t))}{M_{max}k_{d}}\right)^2}.
	\end{equation}
	It is found that, when $M^*(t) \ll M_{max}$ is satisfied, "$+$" is initially chosen;  and then at the intermediate time $t_1$ which satisfies $\frac{{M^*(T)}^2-{M^*(t_1)}^2}{w^2}+\left( \frac{k_{a}(M_{max}-M^*(t_1))}{M_{max}k_{d}}\right)^2=0$, it changes to "$-$" until $T$. This choice will cause mass concentration of aggregates to decrease first and then increase.
Otherwise, we just choose "$-$" for the entire control process. This choice results in a monotonous increase in the mass concentration of aggregates.

	\color{black}
	
	\subsection{Derivation of the analytical solution}\label{Ap_A}
	
	The optimal control problem under the linear approximation is given by
	\begin{equation}\label{PMP_optimal solution2}
		\left\{\begin{array}{cl}
			\dot{M^*}(t)=&k_{a}M^*-k_{d} u^* M^*,\\
			\dot{\lambda^*}(t)=&-k_{a}\lambda^*+k_{d} u^*\lambda^* +2M^*,~t<T, \\
			M^*(0)=&{m_0}, \qquad \lambda^* (T)=0.
		\end{array}\right.
	\end{equation}
	Additionally, based on the optimality condition, $H(M^*(t),u^*(t),\lambda^*(t),t) \ge H(M^*(t),u(t),\lambda^*(t),t)$ with $H(M,u,\lambda,t)=\lambda (k_{a}M-k_{d} u M)-M^2-w^2u^2$,
	we have
	\begin{equation}
		u^*(t) \in {argmin}_{u\in [0,u_{max}]}
		\left[\lambda \left(k_{a}M^*-k_{d} u M^*\right)-(M^*)^2-w^2u^2 \right],
	\end{equation}
	or equivalently,
	\begin{equation}
\label{alge}
		0
		={\frac{\partial H(M,u,\lambda,t)}{\partial u}}\bigg|_{M^*,u^*,\lambda^*}
		=-2w^2u^*(t)-k_{d} \lambda^*(t) M^*(t).
	\end{equation}
	It is straightforward to solve above algebraic equation, which leads to \(u^*(t)=-{k_{d} \lambda^*(t) M^*(t)}/{2w^2}\) by noticing $w^2>0$. Combining this with the terminal co-state $\lambda^*(T)=0$ deduces \(u^* (T)=0\).
	
	In the next step, by taking the time derivative on both sides of the algebraic equation \eqref{alge},
	\[
	-\left(\frac{2w^2}{k_{d}}\right) \frac{du^*}{dt}=\lambda^* \frac{dM^*}{dt}+ M^*\frac{d\lambda^*}{dt},
	\]
 and substituting the ODEs of $M^*$ and $\lambda^*$ in Eqs. \eqref{PMP_optimal solution2} into the above one, we arrive at
	\begin{equation}\label{equ:du}
		\frac{du^*}{dt}=-\frac{k_{d}}{w^2}(M^*)^2.
	\end{equation}
By multiplying \(-2k_{d}M^*/{w^2}\) on both sides of the first equation in \eqref{PMP_optimal solution2} and substituting Eq. \eqref{equ:du}, a decoupled equation for the co-state variable $u^*(t)$ is derived, i.e.
	\begin{equation}
		\frac{d^2u^*}{dt^2}=2k_{a}\frac{du^*}{dt}-2k_{d}u^*\frac{du^*}{dt},
	\end{equation}
	which is of the second order. Integration once gives
	\[
	\frac{du^*}{dt}=-k_{d}{u^*}^2+2k_{a}u^*-C,
	\]
where $C$ is an undetermined constant. The above equation can be rewritten as
	\[\frac{du^*}{dt}=-k_{d}
\left[
\bigg(u^*-\frac{k_{a}}{k_{d}}\bigg)^2+\bigg(\frac{C}{k_{d}}-\frac{k_{a}^2}{k_{d}^2}
\bigg)\right],\]
whose solution gives the optimal trajectories of $u^*$ and $M^*$,
	\begin{equation*}
		u^*(t)=C_1\tan(-k_{d}C_1t+C_2)+\frac{k_{a}}{k_{d}},\quad
		M^*(t)=wC_1\sec(-k_{d}C_1t+C_2).
	\end{equation*}
	The constants \(C_1\) and \(C_2\) are determined by the boundary conditions \(M^*(0)={m_0}\) and \(u^* (T)=0\) as
	\begin{equation*}
		w^2C_1^2\sec^2(C_2)= {m_0}^2, \quad
		C_1^2\tan^2(-k_{d}C_1T+C_2)=\frac{k_{a}^2}{k_{d}^2},
	\end{equation*}
	which further gives
	\begin{equation*}
		u^*(0)=C_1\tan(C_2)+\frac{k_{a}}{k_{d}},\quad
		M^*(T)=wC_1\sec(-k_{d}C_1T+C_2).
	\end{equation*}
\subsection{$L_{1}$ Norm for the objective functional}\label{L1}
The quadratic cost in the main text is adopted to illustrate our conclusions, alternative forms of the objective functional are allowed too, such as the $L_{1}$ norm which reads
\begin{equation}
	J[u(\cdot)]=\int_{0}^{T}[M(t)+wu(t)]d t.
\end{equation}
To keep the non-negativity of $M(t)$ and $u(t)$, additional constraints are introduced too, that is
\begin{equation}
	\forall t \geq 0, \qquad M_{max} \geq M(t) \geq 0 \quad \text{ and} \quad u_{max}\geq u(t) \geq 0,
\end{equation}
here $u_{max}$ is not a fixed bound anymore. According to the optimality condition, $H(M^*(t),u^*(t),\lambda^*(t),t) \ge H(M^*(t),u(t),\lambda^*(t),t)$ with $H(M,u,\lambda,t)=\lambda (k_{a}M(1-M/M_{max})-k_{d} u M)-M-w u$,
we have
\begin{equation}
	u^*(t) =\begin{cases}
		u_{max}, \quad -(\lambda^* M^* k_{d} +w)>0,\\
		0, \quad -(\lambda^* M^* k_{d} +w)<0.
	\end{cases}
\end{equation}
The above description actually is a Bang-bang control.

So when do we have $(\lambda^* M^* k_{d} +w)=0$? It is found that this happens at only one time point. Based on PMP, we have
\[\frac{d \lambda^*(t)}{d t}=-k_{a}\lambda^*+2k_{a}\lambda^* M^*/M_{max}+k_{d} u^*\lambda^* +1, \lambda^*(T)=0\]
and
\begin{equation}\label{eqN}
	\frac{d N(t)}{d t}=(k_{a} N/M_{max}+1)M^*,
\end{equation}
where $N(t)=\lambda^*(t) M^*(t)$.
Since $N(T)=0$, we assert that $N(t)>-M_{max}/k_{a}$. It tells us an important fact that when $w/k_{d}>M_{max}/k_{a}$, there must be $u^*(t)=0, t\in [0,T]$. This states an extreme case when an inhibitor is too toxic and ineffective, we would better not to use it at all. Because of $(k_{a} N/M_{max}+1)M^*$, $N(t)$ increases monotonically to $0$, which means $-(\lambda^* M^* k_{d} +w)=0$ holds at one time point at most. Solving Eq.\eqref{eqN}, we get
\[ N(t)=\frac{M_{max}}{k_{a}}\left[\exp(k_{a}/M_{max}\int_{T}^{t}M^*(\tau)d \tau )-1\right]. \]

If the switch happens at $t_1$, it must satisfy the condition $N(t_{1})=-w/k_{d}$.
We have
\begin{equation}
	\frac{d M^*(t)}{d t}=
	\begin{cases}
		k_{a}M^*(t)\left(1-\frac{M^*(t)}{M_{max}}\right)-k_{d} u_{max} M^*(t), \quad t \in[0, t_1],\\
		k_{a}M^*(t)\left(1-\frac{M^*(t)}{M_{max}}\right),\quad t \in (t_{1},T].
	\end{cases}
\end{equation}
Through calculations, $t_{1}$ must satisfy
\begin{equation}\label{t1}
	(n_{1}-1)(\exp(k_{a}(T-t_{1}))-1)+(1/n_2-M_{max}/m_{0})\exp(-k_{a}n_{2}t_{1})-1/n_{2}=0,
\end{equation}
where $n_{1}=k_{d}M_{max}/(k_{a}w)$ and $n_{2}=1-k_{d}u_{max}/k_{a}$.
To sum up, the Bang-bang control is given by
\begin{equation}
	u^*(t) =\begin{cases}
		u_{max}, \quad t \in [0,t_1],\\
		0, \quad  t \in (t_1,T].
	\end{cases}
\end{equation}
where $t_{1}$ satisfies \eqref{t1}.

We have known that the optimal control trajectory is uniquely determined by the switch time $t_{1}$. How the switch time is affected by parameters $k_a$, $k_d$, $w$ and $u_{max}$ is of great importance. As shown in Figs. \ref{figure_7}(A1) and (C1), the switch time remains zero (meaning no inhibitor) when the apparent fibril growth rate \(k_{a}\) is either too large or too small, or when the rate constant for fibril inhibition \(k_{d}\) is very small. In the presence of large fibril inhibition rate \(k_{d}\) or small tolerance of inhibitors $w$, the switch time is close to 1 (meaning using inhibitors at every time), as long as $u_{max}$ remains small.
\begin{figure}[htbp]
	\centering
	\includegraphics[width=5in]{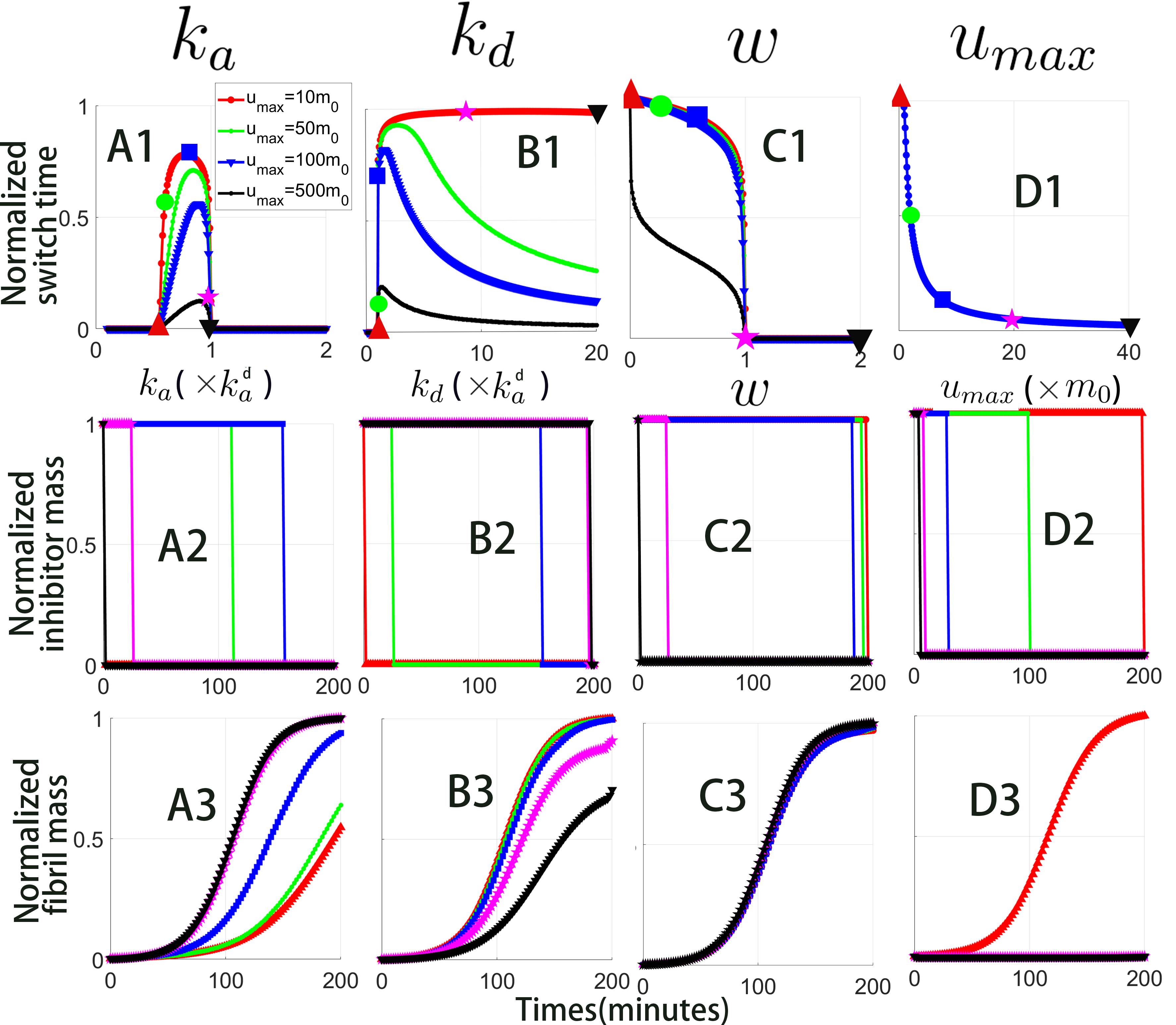}
	\centering
	\caption{Influence of model parameters on the Bang-bang control. Subplots in the first row show how the switch time depends on $k_a, k_d, w$ and $u_{max}$. The corresponding optimal trajectories for the inhibitor concentration and fibril concentration with parameters at five remarked positions are illustrated in the second and third rows.  Default parameters are set as \(m_0=2.7 \times 10^{-3}\)\(\mathbf{\mu M}\), \(k_{a}=5.6 \times 10^{-2} \min^{-1}\), \(M_{max}=1\)\(\mathbf{\mu M}\), \(k_{d}=k_{a}\), \(w=1\) and \(u_{max}=10m_{0}\) in figure(A-C). In figure(D), \(k_{d}=k_{a}/m_{0}\) is taken.}\label{figure_7}
\end{figure}

\color{black}
\subsection{Strategies for Drug Dosing}
\label{Ap_B}
The strategies used for drug dosing in the maintext are summarized as follows:

\paragraph{Lump-sum adding}
All inhibitors will be added at the initial instant as shown in Figs. \ref{figure_4}(A)-(B). The corresponding ODEs for $M(t)$ and $u(t)$ are
\begin{equation}\label{ycxtj}
	\left\{\begin{array}{cl}
		\dot{M}(t)=&k_{a}M(1-\frac{M}{M_{max}})-k_{d} u M,\\
		\dot{u}(t)=&-k_{d} u M,~t\leq T,
		\\M(0)=&{m_0}, \quad u (0)=u_{tot}.
	\end{array}\right.
\end{equation}

\paragraph{Constant adding}
Inhibitors will be added at a constant rate during the whole procedure from $t=0$ to $t=T$,
\begin{equation}\label{cztj}
	\left\{\begin{array}{cl}
		\dot{M}(t)=&k_{a}M(1-\frac{M}{M_{max}})-k_{d} u M,\\
		\dot{u}(t)=&\frac{u_{tot}}{T}-k_{d} u M,~t\leq T,\\M(0)=&{m_0}, \quad u (0)=\frac{u_{tot}}{T}.
	\end{array}\right.
\end{equation}

\paragraph{Periodic adding}

During the whole procedure, inhibitors are added intermittently. The adjacent adding of inhibitors and  non-adding forms a cycle.
Assume that there are \(N\) cycles $\{0,1,2,\cdots,N-1\}$.  Each cycle includes a time interval $(T/N-t_1)$ during which inhibitors are added uniformly, and a time interval $t_1$ when no inhibitor is added.
Let
\[u_{per}(t)=\left\{\begin{array}{cl}
	\frac{u_{tot}}{T-N t_1} & \quad \frac{kT}{N}\le t <\frac{(k+1)T}{N}-t_1,
	\quad k=0,1,2,...N-1,\\
	0&\quad\frac{(k+1)T }{N}-t_1 \le t <\frac{(k+1)T }{N},\quad k=0,1,2,...N-1.
\end{array}\right.\]
The following ODEs are obtained,
\begin{equation}\label{jgtj}
	\left\{\begin{array}{cl}
		\dot{M}(t)=&k_{a}M(1-\frac{M}{M_{max}})-k_{d} u M,\\\dot{u }(t)=&u_{per}-k_{d} u M,
		\\M(0)=&{m_0}, \quad u (0)=\frac{u_{per}}{T-N t_1}.
	\end{array}\right.
\end{equation}

\paragraph{Multiple times adding}

In this strategy, inhibitors are equally divided into $N$ pieces and each piece will be added sequentially into the solution after a constant time interval (see Figs. \ref{figure_4}(A)-(B) for example). Acccordingly, the optimal control is given by
\begin{equation}\label{mta}
	\left\{\begin{array}{cl}
		\dot{M}(t)=&k_{a}M(1-\frac{M}{M_{max}})-k_{d} u M,\\
		\dot{u}(t)=&\frac{u_{tot}}{N}\sum_{i=1}^N\delta(t-iT/N)-k_{d} u M,~t\leq T,\\M(0)=&{m_0}, \quad u (0)=\frac{u_{tot}}{N}.
	\end{array}\right.
\end{equation}

\subsection{Experimental Setup}

A$\beta40$ (purity $>98\%$) was purchased from ChinaPeptides. Other chemical reagents were bought from Aladdin. A$\beta40$ was pre-treated with ammonium hydroxide as previously described \cite{li2021Inhibitory}. A$\beta40$ was then solubilized in $6$ mM NaOH to $100 \mathbf{\mu M}$ while ThT and EGCG were directly solubilized in PBS. For ThT kinetics, the final concentrations of A$\beta40$ and ThT were $5 \mathbf{\mu M}$ and $20 \mathbf{\mu M}$, respectively. Corning $3603$ $96$-well plates were used and ThT kinetics was monitored at a Tecan Infinite M200 PRO microplate reader with continuously shaking. Fluorescence intensities at ex = $440$ nm and em = $480$ nm were recorded every $6$ min.

For  the ``Lump-sum'' group, $10 \mathbf{\mu M}$ EGCG was mixed with A$\beta40$ and ThT at $t=0$ min.

For the ``Twice adding'' group, $5~ \mathbf{\mu M}$ EGCG was mixed with A$\beta40$ and ThT at $t=0$ min. Then stock solutions of EGCG ($0.5 m\mathbf{M}$) was added to the mixture at $t=30$ min or $t=60$ min to increase the concentration of EGCG to $10~\mathbf{\mu M}$ while minimize the changes of A$\beta40$ concentration.

For the ``Four times adding'' group, $2.5~ \mathbf{\mu M}$ EGCG was mixed with A$\beta40$ and ThT at $t=0$ min. Then stock solutions of EGCG ($0.5 m\mathbf{M}$) was added to the mixture at $t=30, 60, 90$ min or $t=60, 120, 180$ min to increase the concentration of EGCG to $5, 7.5$ and $10~\mathbf{\mu M}$, respectively.

\subsection*{Declaration of Competing Interest}
The authors declared no competing interests.

\subsection*{Acknowledgements}
The authors acknowledged the financial supports from the National Natural Science Foundation of China (Grants No. 21877070, 22007044, 12205135), the Natural Science Foundation of Fujian Province of China (2020J05172, 2020J01864), and the Startup Funding of Minjiang University (mjy19033).

\bibliographystyle{unsrt}
\bibliography{bibfile}

\end{document}